\documentclass[twoside,leqno,twocolumn]{article}
\pdfoutput=1

\usepackage{ltexpprt}

\usepackage{authblk}
\usepackage{lmodern}
\usepackage{balance}

\usepackage{amsmath,amssymb}

\usepackage[margin=1in]{geometry}

\usepackage[colorlinks,
linkcolor=blue,
filecolor=blue,
citecolor=magenta,
urlcolor=blue,
pdfstartview=FitH,
breaklinks]{hyperref}

\usepackage{url}
\usepackage{bm}
\usepackage{multirow}
\usepackage{graphicx}
\usepackage[boxed]{algorithm}

\usepackage{mathptmx}
\usepackage{latexsym}
\usepackage{verbatim}
\usepackage[dvipsnames,usenames]{color}
\usepackage{float}
\usepackage{todonotes}
\usepackage{enumerate}
\usepackage{pgf,tikz}
\usetikzlibrary{arrows}

\usepackage{epigraph}


\setlength\epigraphwidth{8cm}
\setlength\epigraphrule{0pt}

\usepackage{etoolbox}

\makeatletter
\patchcmd{\epigraph}{\@epitext{#1}}{\itshape\@epitext{#1}}{}{}
\makeatother

\newtheorem{remark}{Remark}[section]
\newtheorem{observation}{Observation}[section]

\newcommand{\lp}{\left(}
\newcommand{\rp}{\right)}

\newcommand{\la}{\left|}
\newcommand{\ra}{\right|}

\DeclareMathOperator*{\argmin}{arg\,min}
\DeclareMathOperator*{\argmax}{arg\,max}

\newcommand{\eat}[1]{}

\newcommand{\A}{A}

\newcommand{\jmin}{j_{\mathrm{min}}}

\DeclareMathOperator{\E}{E}

\newcommand{\Reals}{\mathbb{R}}

\newcommand{\email}[1]{\href{mailto:#1}{\nolinkurl{#1}}}
\newcommand{\qed}{\hfill\hbox{\rlap{$\sqcap$}$\sqcup$}}
\newcommand{\qedd}{\hfill\hbox{\rlap{$\sqcap$}$\sqcup$\ \rlap{$\sqcap$}$\sqcup$}}


\title{Pricing Online Decisions: Beyond Auctions}

\author[1]{Ilan Reuven Cohen}
\author[1]{Alon Eden}
\author[1]{Amos Fiat}
\author[1,2]{\L{}ukasz Je\.z{}}
\affil[1]{Blavatnik School of Computer Science, Tel-Aviv University%
\thanks{Partially supported by the Israeli Centers of Research Excellence (I-CORE) program, (Center  No.4/11) and the Google Inter-University Center.}
}
\affil[2]{University of Wroc{\l}aw, Institute of Computer Science%
\thanks{Partially supported by the Foundation for Polish Science (FNP) START Scholarship
and the Polish National Science Center (NCN) Grant DEC-2013/09/B/ST6/01538.}
}
\affil[ ]{\textit {\{alonarden,ilanrcohen\}@gmail.com, fiat@tau.ac.il, lje@cs.uni.wroc.pl} }

\floatstyle{ruled}
\newfloat{alg}{thbp}{loa}
\floatname{alg}{Algorithm}

\newfloat{pricing}{thbp}{lop}
\floatname{pricing}{Pricing Scheme}

\newfloat{linprog}{thbp}{llp}
\floatname{linprog}{Linear Program}

\newfloat{example}{thbp}{loe}
\floatname{example}{Example}

\begin{document}
\definecolor{dcrutc}{rgb}{0.86,0.08,0.24}
\definecolor{qqqqff}{rgb}{0,0,1}
\definecolor{cqcqcq}{rgb}{0.75,0.75,0.75}
\definecolor{ffqqtt}{rgb}{1,0,0.2}
\definecolor{ffttqq}{rgb}{1,0.2,0}
\definecolor{xdxdff}{rgb}{0.3,0.4,0.5}
\maketitle

\epigraph{``You pays your money and you takes your choice."}{--- \textup{Mark Twain}, Huckleberry Finn}


\begin{abstract}
We consider dynamic pricing schemes in online settings where selfish agents generate online events. Previous work on online mechanisms has dealt almost entirely with the goal of maximizing social welfare or revenue in an auction settings. This paper deals with quite general settings and minimizing social costs. We show that appropriately computed posted prices allow one to achieve essentially the same performance as the best online algorithm. This holds in a wide variety of settings. Unlike online algorithms that learn about the event, and then make enforceable decisions, prices are posted without knowing the future events or even the current event, and are thus inherently dominant strategy incentive compatible.

In particular we show that one can give efficient posted price mechanisms for metrical task systems, some instances of the $k$-server problem, and metrical matching problems. We give both deterministic and randomized algorithms. Such posted price mechanisms decrease the social cost dramatically over selfish behavior where no decision incurs a charge. One alluring application of this is reducing the social cost of free parking exponentially.
\end{abstract}

\section{Introduction}
We consider scenarios where agents arrive and make decisions which influence the global environment. One such setting is that of online auctions, where goods or combinations of goods are sold over time. In this paper we consider a more general setting that captures more general online social choice issues.

We suggest using dynamic posted prices that dynamically assign a surcharge to every decision. Under the assumption that agents are rational, this allows one to prove approximation bounds with respect to optimal target function.

The competitive analysis of online algorithms typically deals with problems cast as follows:
\begin{itemize}
  \item A sequence of events $\sigma_1, \sigma_2, \ldots$ occurs over time.
  \item Subsequent to observing event $\sigma_i$, a centralized online algorithm takes some decision $s_i\in S_i$, where $S_i$ is the set of possible decisions that can be taken in response to event $\sigma_i$.
  \item Centralized online algorithms choose their decisions for event $i$ without knowing what the future holds so as to maximize (or minimize), some function $f(\overline{\sigma},\overline{s})$ of the sequence of events $\overline{\sigma}=\sigma_1, \ldots, \sigma_n$ and associated decisions $\overline{s}=s_1, \ldots, s_n$.
\end{itemize}

In this paper we reinterpret the framework above by having strategic agents initiate events.
So rather than an online sequence of events we have an online sequence of strategic agents,and rather than a centralized online algorithms making decisions the strategic agents make the decisions themselves.

The $i$th agent, has some (private) type $t_i \in T_i$.
Conditioned upon previous events $\sigma_1, \ldots, \sigma_{i-1}$, and previous decisions $s_1\in S_1, \ldots, s_{i-1}\in S_{i-1}$, an agent type $t\in T_i$ is a function  $t:S_i\mapsto \Reals.$ An agent of type $t\in T_i$ has value $t(s)$ for decisions $s\in S_i$.  Ergo, a rational agent $i$ will choose decision $$s_i \in \argmax_{s\in S_i} t_i(s).$$ We remark that $t$ may take either positive or negative values. In the translation of online problems to the strategic setting, costs translate to negative value, benefits translate to positive values. Note that the target function $f$ remains as defined in the non-strategic online setting.

To motivate these strategic agents to ``do the right thing" (from the perspective of the designer), we introduce {\sl dynamic posted pricing schemes} that set a surcharge on the set of possible decisions. A dynamic posted pricing scheme $P$ generates a sequence of functions $\langle P_1, P_2,
\ldots \rangle$. Dynamic posted pricing schemes determine the $i$'th function, $P_i$,  {\em before} event $\sigma_i$ arrives.

The function $P_i:S_i \mapsto \Reals$ sets posted prices (a surcharge) for every possible decision in $S_i$. We assume quasilinearity, {\em i.e.},  the utility of agent $i$ of type $t\in T_i$ --- should she decide upon $s\in S_i$ --- is $$u_i(s) = t(s)-P_i(s).$$ Rational agents always seek to maximize their utility, or minimize their disutility which is the negation of their utility.

Dynamic posted prices of particular interest are varying tolls on express lanes and the SFPark system (SFpark.org) which sets parking prices in San Francisco as a function of parking congestion \cite{shoup2005high,NYT:sfpark}. Interestingly, our dynamic posted pricing scheme for metric matching on the line described below can be viewed as a refinement of the SFPark system.

A very simple example for dynamic pricing, where prices are set once and for all (all $P_i$'s are the same), is the problem of online matching on unweighted bipartite graphs\cite{KarpVV90}. In this problem, there is a stationary right side, vertices (agents) on the left arrive online along with edges to all adjacent vertices on the right. The target is to maximize the number of matches made.

The KVV algorithm \cite{KarpVV90} performs a random permutation of the vertices on the right and assigns an incoming vertex to the first available vertex on the right. Translating this into the strategic agent setting, agent types give a utility of one for a match with one of the adjacent vertices. Strategic agents can be induced to emulate the KVV algorithm by a pricing scheme assigning a surcharge of $\epsilon, 2 \epsilon, \ldots$ to the vertices on the right in some random order. Note that the surcharge for a vertex never changes so there is no ``dynamic" aspect to this pricing scheme.

Posted prices are inherently a dominant strategy incentive compatible mechanism. Given a dynamic posted pricing scheme $P$ that produces functions $\langle P_1, P_2,
\ldots \rangle$, and a sequence of agent types $\overline{t}=t_1, t_2, \ldots, t_n$, the set of decision sequences in equilibria is $$\mathrm{eq}(P,\overline{t}) = \left\{ \overline{s} \,\middle\vert\, \overline{s} = s_1, \ldots, s_n,\quad s_i\in \argmax_{s\in S_i} u_i(s)\right\}.$$
Because of ties it may be that  $|\mathrm{eq}(P,\overline{t})|>1$.

In the context of mechanism design it is often assumed that if a mechanism is dominant strategy truthful, the agents may as well reveal their true type to the mechanism. Moreover, the revelation principle states that without loss of generality, one need consider only such direct revelation mechanisms.

In many contexts, revealing the agent type is problematic or absurd. When going to the supermarket one may observe what people are buying, it is hardly plausible to interview customers at length as to what their true preferences are. Moreover, revealing the true agent type may be a grave breach of privacy, see \cite{Naor-priv}.

One could, in principle, combine a dynamic posted pricing scheme yet insist that the agents reveal their true type. Were we to allow this, the functions $P_i$ could depend on the true types of previous agents: $t_1, \ldots, t_{i-1}$, and on previous decisions taken by these agents: $s_1, \ldots, s_{i-1}$ (may be relevant if agents use randomization).

 To avoid the unnatural aspects of type revelation, we like to restrict the information available to the dynamic pricing scheme. This should be information that is (relatively) easy to collect by simple observation of the consequence of agent decisions without having to understand the causes for such decisions.  {\em E.g.}, one can observe what customers purchase, but not what they plan to cook for dinner or whom they are having over. We would like to allow dynamic pricing schemes\footnote{Or mechanisms in general.} to observe (some) consequences of decisions made by (earlier) agents, but not necessarily allow access to the true agent types.

The less one assumes about observable history, the more robust the concept. {\em E.g.}, in the context of dynamic posted pricing for parking, it seems reasonable to place sensors at parking spaces so as to as to learn what parking spaces are free but it seems much less plausible to require a host of miniaturized spy drones to follow everyone about (so  as to learn the true agent type --- her true destination). See discussion and definitions in Section \ref{sec:discuss}.

In the context of strategic agents, we use the term price of anarchy \cite{KP-POA} instead of the term competitive ratio. {\em I.e.}, under the assumption that agents are rational, the price of anarchy is the ratio between the value of the target function $f$ when making the best possible decisions and the value of $f$ when decisions are the worst case equilibria.

We distinguish between benefit problems, where the goal is to maximize $f$, and cost problems where the goal is to minimize $f$. In particular, social welfare maximization is to maximize the function $\sum_{i=1}^{|\sigma|} t_i(s_i)$, and social cost minimization is to minimize the  social cost function $ -\sum_{i=1}^{|\sigma|} t_i(s_i).$ Clearly the goals are equivalent.

Given a dynamic pricing scheme $\overline{P}$, target function $f$, the price of anarchy (for benefit problems) is  \begin{equation}\min_{\overline{\sigma},\, \overline{s}\in \mathrm{eq}(\overline{P},\overline{\sigma})}\frac{f(\overline{\sigma},\overline{s})}{  f(\overline{\sigma},\overline{\mathrm{opt}(\sigma)})}, \label{eq:poab}\end{equation} where $$\overline{\mathrm{opt}(\sigma)}\in \argmax_{\overline{s}: s_i\in S_i \text{ for all } i} f(\overline{\sigma},\overline{s}).$$

 Given a dynamic pricing scheme $\overline{P}$, and target function $f$, the price of anarchy (for cost problems) is  \begin{equation}\max_{\overline{\sigma},\, \overline{s}\in \mathrm{eq}(\overline{P},\overline{\sigma})}\frac{f(\overline{\sigma},\overline{s})}{  f(\overline{\sigma},\overline{\mathrm{opt}(\sigma)})}, \label{eq:poa}\end{equation} where $$\overline{\mathrm{opt}(\sigma)}\in \argmin_{\overline{s}: s_i\in S_i \text{ for all } i} f(\overline{\sigma},\overline{s}).$$

It is common to consider the competitive ratio in terms of an adversary, and one can do the same in the context of the price of anarchy. With this interpretation, one may assume that the adversary knows the true types of the entire sequence in advance.

To clarify the differences between online algorithms, online mechanisms, and online posted prices (a special case of online mechanisms) the following summary may be helpful:

\begin{enumerate}
\item Online algorithms: events appear and are dealt with in a centralized fashion. It is assumed that events are ``true". Online algorithms enforce their decisions. There is no sense in which an event says ``I don't like what you've decided in my case."
\item Online mechanisms: events appear and decisions are made, prizes and fees set, in a centralized fashion.  Just as with competitive algorithms, online mechanisms can (magically?) enforce their decisions. There is no sense in which an event can ``revolt" against a decision that she does not like.
\item  Dynamic posted pricing (this paper) sets a posted price for every possible decision. Incoming strategic events make their own decisions based upon their own self interest. ``You pays your money and you makes your choice". Moreover, different dynamic posted pricing schemes can be compared both with respect to the competitive ratio they achieve and with respect to the quality of the observable history they require.
\end{enumerate}

\subsection{Problems and Results}

We consider the following problems

\subsubsection*{Metrical task systems \cite{BorodinLS92}}

    Metrical task systems are described by a symmetric non-negative $m\times m$ matrix, $(d_{st})_{1 \leq s,t \leq m}$, where $d_{ss}=0$ and the triangle inequality holds: $d_{st}\leq d_{sk} + d_{kt}$ for all $1 \leq s,t,k \leq m$. The set $S=\{1, \ldots, m\}$ is the set of states. A task is a vector of $m$ non negative real values, $${w} =\left(w_1, w_2, \ldots, w_m\right).$$
      In response to the arrival of a new task, $w^i$,
      a centralized online algorithm decides upon some state $s\in S$.
      The cost of a sequence of decisions $s_1, \ldots, s_n$ for a sequence of tasks $w^1, \ldots, w^n$ is $$\sum d_{s_{i-1}s_i} + w^i_{s_i}.$$

      \subsubsection*{Strategic metrical task systems}
          The [dis]utility of an agent is the sum of transition cost (to go from state to state), the cost of executing the task in the state chosen, and the surcharge for the decision taken. It is very easy to see that, without introducing a surcharge, the price of anarchy is unbounded. strategic agents will never take expensive transitions. If one were to consider an online algorithm as a mechanism without money, agents will lie shamelessly. Every state other than their greedy preference would cost infinity.

Here we give a dynamic posted pricing scheme with a competitive ratio of $O(m)$, within $O(1)$ of the best deterministic competitive ratio possible. The observable history consists of the actual state transitions and the cost to serve the tasks in the states chosen. The observable history {\sl does not} include the cost of the tasks in states that were not chosen.

   \subsubsection*{The $k$-server problem \cite{ManasseMS90,KP-server,Koutsoupias09}}

   The $k$ server problem has $k$ servers located in some metric space. Online requests arrive following which
some server is to be moved to the request location. {\em I.e.}, one needs to decide upon which server is to be moved. The goal is to minimize the sum of distances traveled.
A competitive ratio of $k$ is known for some metric spaces, whereas $2k-1$ is an upper bound for general metric spaces \cite{KP-server}, {\em vs.} a lower bound of $k$. Closing the gap for deterministic algorithms is an infamous open problem.

  \subsubsection*{The strategic $k$-server problem}

The disutility of an agent is the distance traversed by the server. Thus, without a dynamic posted pricing scheme, agents will always choose the closest server to move, giving an unbounded price of anarchy. We
give a dynamic posted pricing scheme for any metric space with 2 servers with  a price of anarchy 10. For $k$-servers on the line, we give the optimal price of anarchy $k$. In this case, the observable history (the server positions) allows one to infer the true type of previous agents.

\subsubsection*{Metric Matching~\cite{KhullerMV94,DBLP:conf/dagstuhl/KalyanasundaramP96}}

       The metric matching problem is described on some metric space. Events are points in the metric space, who are to be matched to some unmatched point, with the goal of minimizing the sum of distances between the incoming points and their match. One application of this problem is parking, where one seeks to assign homeowners a parking spot close to their home. The problem on the line is another infamous open problem, conjectured to have a constant competitive ratio in \cite{DBLP:conf/dagstuhl/KalyanasundaramP96} but still open.

\subsubsection*{Strategic Metric Matching}

 The disutility of an agent is the distance between the agent destination and the point assigned to the agent. Ergo, strategic agents will always choose the closest possible point.

   It follows from \cite{KhullerMV94} that without surcharges the price of anarchy may be exponential in the size of the metric space.

We give a randomized dynamic posted pricing scheme with a competitive ratio logarithmic in the ratio between the largest and smallest distances between points. In particular, this implies an exponential improvement in the price of anarchy over the price of anarchy without surcharges. Moreover, given an estimate on the cost of the optimal solution, we can attain a doubly exponential improvement over the price of anarchy without surcharges.


\section{Related Work}

Lavi and Nisan \cite{LN00} initiated the study of online auctions. The consider a model of indivisible goods where bidders have a privately known valuation for different quantities of the goods. Agents learn their valuation at some time and must bid immediately. The auction must decide immediately how many units to allocate and at what price. They show that the only dominant strategy incentive compatible mechanisms are those that offer a price menu for different quantities.

Awerbuch, Azar, and Myerson~\cite{AAM03} give a general scheme that produces posted prices for general combinatorial auctions, with a competitive ratio equal to the logarithm of the ratio between highest and lowest prices, times the underlying competitive ratio for the combinatorial auction (which is terrible, in general).

In \cite{DBLP:conf/sigecom/FriedmanP03} Friedman and Parkes introduced a model of online mechanism design, they seek strategyproof mechanisms where agents may manipulate their arrival time and values. They give a variant of the VCG mechanism for some such problems.
One of the issues from this paper and subsequent work has been misrepresentation of arrival times (or departure times, or both).

In the main, we do not consider misrepresentation with respect to time or position in the sequence. See discussion in Section \ref{sec:discuss}

In
Parkes~\cite{Parkes} a general formalism for dynamic environments and online mechanism design is given. In this general model a mechanism makes (and enforces) a sequence of decisions. The decisions may depend on interaction with agents as well as changes to the environment.
In the online mechanism design model, agent types include arrival times, departure times, and a valuation function of the decisions made by the algorithm.

 Unfortunately, the positive results in the general setting of \cite{Parkes} have been limited to so-called ``single valued preferences" where an agent gets a fixed value $r$ iff  one of a set of ``interesting decisions" in made between her arrival and departure. (The positive results also require limited misrepresentation of the agent types).

It is important to note that none of the problems we consider are single valued. For  task systems, the $k$-server problem, and the parking problem, different decisions will generally give different value to the agent. Also note that we do not allow misrepresentation of arrival time, and insist that departure time be equal to arrival time.

There has been a large body of work on online auctions in the setting where the order of events is a uniformly random permutation. Hajiaghayi, Kleinberg and Parkes \cite{HKP04} discuss limited supply online auctions, and Babaioff, Immorlica, Kempe, and Kleinberg \cite{BabaioffIKK08} introduced the knapsack and matroid secretary problems, about which there has been much subsequent work.

\section{Pricing States for Metrical Task Systems}
\subsection{Problem Definition}
A metrical task system \cite{BorodinLS92} is described by a symmetric non-negative $m\times m$ matrix, $(d_{st})_{1 \leq s,t \leq m}$, where $d_{ss}=0$ and the triangle inequality holds: $d_{st}\leq d_{sk} + d_{kt}$ for all $1 \leq s,t,k \leq m$. The set $S=\{1, \ldots, m\}$ is the set of states.

The strategic metrical task system is a variant of the metrical task system where tasks are associated with agents.
Agents arrive online, the $i$th agent to arrive has some task to perform, represented by a vector of $m$ non negative real values, $${w}^i =\left(w^i_1, w^i_2, \ldots, w^{i}_m\right).$$ The task $w^{i}$ is private to agent $i$.

The initial state of the system is $s_0$, the state of the system may change over time. Let $s_{i-1}$ be the state of the system upon the arrival of agent $i$. The $i$th agent to arrive may decide to change the state of the system upon her arrival to some other state ({\em i.e.}, $s_i \neq s_{i-1}$) or not ({\em i.e.} $s_{i}=s_{i-1}$).

Given states $s$, $t$, and a task $w=(w_1,w_2, \ldots, w_m)$, define $$C(s,t,w) = w_{t} + d_{st},$$ {\em i.e.} the cost to switch states from $s$ to $t$ and process task $w$ in state $t$.

A dynamic pricing scheme for strategic metrical task systems sets prices (surcharges) on the states of the system. Formally, a dynamic pricing scheme $P=\left(P_1,P_2,\ldots \right)$ is a sequence of price functions $P_1, P_2, \ldots$, where the $i$th function  $$P_{i}:\left\langle\left(s_1,w^1_{s_1}\right),\left(s_2,w^2_{s_2}\right), \ldots, \left(s_{i-1},w^{i-1}_{s_{i-1}}\right)\right\rangle\times S\mapsto \Reals^+.$$ The information available to the pricing function $P_{i}$ is the state $s_j$ chosen made by agent $j$ and the work $w^j_{s_j}$ expended on the $j$th agent's task $w^j$ in that state, for all $j < i$. This is captured by the sequence $\left\langle\left(s_1,w^1_{s_1}\right),\left(s_2,w^2_{s_2}\right), \ldots, \left(s_{i-1},w^{i-1}_{s_{i-1}}\right)\right\rangle$. The prices computed by pricing function $P_{i}$ are relevant for agent ${i}$.

For notational convenience, we use $P_{i}$ as a pricing function over the set of states after $i-1$ tasks are processed, where the ``history" of the system is implicit. We stress that the pricing is set without any information regarding future events.

Note that pricing function $P_i$ {\em cannot} depend on tasks $w^j$, $j\geq i$, (this is distinct from an online algorithm  that may depend on $w^{i}$). Moreover, it knows very little about $w^1, w^2, \ldots, w^{i-1}$. It only knows the {\em decisions} made by the agents and the actual work done, {\em i.e.}, $s_j$ and $w^j_{s_j}$ for $1 \leq j \leq i-1$. In particular, this means that the pricing scheme does not know the optimal cost to process tasks $w^1,\ldots,w^{i-1}$ and cannot compute the work function for a state.

The goal of a strategic agent is to to minimize her disutility. This is the sum of the following components:
\begin{enumerate}
  \item The work required to do the task in the state selected by the agent. For agent $i$ this is $w^i_{s_i}$.
  \item The transition cost to change states. For agent $i$ this is $d_{s_{i-1}s_i}$.
  \item The surcharge associated with choosing (or remaining in) some state, this is determined by the pricing scheme (see below). For agent $i$ this is $P_{i}(s_i)$.
\end{enumerate}

Therefore, the agent will choose:
$$s_{i} \in \argmin_{s \in S_i} \{ w^i_{s} + d_{s_{i-1}s} + P_{i}(s) \}$$
As the goal in metrical task systems is cost minimization, the target function $f$ is the social cost, $$f(\overline{\sigma},\overline{s}) = \sum_{i=1}^{|\overline{\sigma}|} \left( w^i_{s_i} + d_{s_{i-1}s_i}\right),$$  {\em i.e.}, the sum of all state transitions and task processing costs.  The price of anarchy is as defined in Equation (\ref{eq:poa}).

\subsection{The Fractional Traversal Algorithm for Metrical Task Systems}\label{sec:travdesc}

We now describe the fractional traversal algorithm for metrical task systems from \cite{BorodinLS92}. The fractional traversal algorithm solves a relaxation of the original problem in that it allows states to perform fractions of a task, possibly switching states repeatedly until the task is completed.

A traversal sequence, $\tau=\tau_1,\tau_2,\ldots$, is an infinite sequence of states of the task system, where every state of the task system may appear several times in the sequence. The fractional traversal algorithm given here as Algorithm~\ref{alg:t} is very simple, it executes a fraction of task until the work done in the current state, since moving to this state, equals the distance to move to the next state in the sequence.

Borodin et al. showed the following with regard to the traversal algorithm:
\begin{theorem}[Borodin et al.~\cite{BorodinLS92}]
\begin{itemize}
   \item There exists a traversal sequence such that the fractional traversal algorithm has a competitive ratio of $8(m-1)$. (Versus the optimal competitive ratio of $2m-1$ achieved by the work function algorithm).
   \item Given any online algorithm that schedules tasks fractionally on multiple states with competitive ratio $c$, there exists another online algorithm that executes the task in a single state with competitive ratio $\leq c$.
  \end{itemize}
\end{theorem}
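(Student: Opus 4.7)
For the first bullet, the plan is to construct the traversal sequence via a recursive, diameter-based decomposition of the state set $S$. Let $a,b\in S$ realize the metric diameter and partition $S\setminus\{a,b\}$ according to which of $a,b$ is closer; recursively build traversal sequences $\tau_a,\tau_b$ for the two sides, and form the global sequence $\tau$ by appropriately interleaving them so that one full ``cycle'' of $\tau$ visits every state and has total length $O(m\cdot d_{ab})$. For the competitive-ratio analysis, I would carry a potential function $\Phi$ capturing the discrepancy between the algorithm's current position in $\tau$ and the position of a fractional offline optimum; the argument is to show that for each incoming task the sum of the algorithm's incremental cost and $\Delta\Phi$ is at most $8(m-1)$ times the offline's incremental cost. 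The key amortized feature exploited is that whenever the traversal hops from one state to the next it has just processed work exactly equal to that hop's distance, so its processing and movement costs are tightly balanced, and the recursive diameter-halving ensures that long hops happen only $O(m)$ times per cycle.

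For the second bullet, I would transform any fractional online algorithm $A^F$ of ratio $c$ into a single-state (randomized) algorithm $A$ via online optimal-transport coupling. At each time $i$, $A^F$ maintains a distribution $p_i$ over $S$; let $A$'s state $X_i$ be a $p_i$-sample drawn conditionally on $X_{i-1}$ using the optimal transport plan between $p_{i-1}$ and $p_i$. Then $\ex[d_{X_{i-1}X_i}]$ equals the earth-mover distance between $p_{i-1}$ and $p_i$, which upper-bounds $A^F$'s fractional transition cost, while $\ex[w^i_{X_i}]=\sum_s p_i(s)w^i_s$ equals $A^F$'s fractional processing cost for task $i$. Summing over $i$ yields $\ex[\cost(A)]\le \cost(A^F)\le c\cdot\opt$, and since each $X_i$ is sampled conditionally on the already-committed $X_{i-1}$ using only information available at time $i$, the algorithm is online.

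The main obstacle is the first bullet: designing the traversal sequence together with a potential function that together yield the constant $8(m-1)$. Any reasonable DFS-like sweep gives some $O(m)$ bound, but extracting the sharp constant requires the recursive diameter-halving structure, so that a single potential change absorbs both the algorithm's hop and its work, and so that the number of ``expensive'' hops per cycle is pinned at $m-1$. The second bullet is essentially routine once the optimal-transport coupling is invoked.
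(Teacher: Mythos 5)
The paper never proves this statement; it is quoted verbatim from Borodin, Linial, and Saks~\cite{BorodinLS92} and used as a black box (the subsequent sections build on the traversal algorithm rather than re-derive its guarantee). So there is no ``paper's own proof'' against which to compare your attempt, and any proposal here is really a reconstruction of BLS, not of anything in this paper.

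A few remarks on the reconstruction itself. For the second bullet, your earth-mover coupling is the standard and correct lossless way to round a fractional algorithm \emph{when ``fractional'' means maintaining a probability distribution over states and paying transport cost between successive distributions}. But the fractional traversal algorithm as defined here (Algorithm~\ref{alg:t}) is fractional in a different sense: it processes pieces of a task \emph{sequentially} in a run of physically visited states along $\tau$, paying the full transit distance on every traversal edge, and ends at a single concrete state $\tau_{t_i}$. There is no distribution $p_i$ to couple, so the optimal-transport argument does not directly apply to this object; one would first need to argue an equivalence between this ``sequential-fractional'' notion and the distributional one, or prove the reduction differently. It is worth noting that the paper's own single-state emulator of the traversal (Algorithm~\ref{alg:ftt}, Theorem~\ref{thm:ft_approx}) incurs a factor-$2$ loss, which is a hint that the lossless $\le c$ claim in BLS is tied to the internal structure of their traversal rather than to a generic rounding. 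For the first bullet, your recursive diameter-halving sketch is plausible in spirit, but the constant $8(m-1)$ is sharp and requires the specific BLS interleaving and potential; nothing in the present paper (including the potential $\Phi=\delta_{\ell_i,t_i}$ of Theorem~\ref{thm:ft_approx}, which serves a different purpose) is a substitute for reading that construction in the original source.
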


 The following variables are used in the traversal algorithm (Algorithm~\ref{alg:t}):

 \begin{itemize}
   \item The current position in the traversal sequence $\tau$ is indicated by the variable $j$.
   \item Task $w^i$ is (fractionally) executed in states $$\tau_{t_{i-1}}, \tau_{t_{i-1}+1}, \ldots, \tau_{t_i}.$$ {\em I.e.}, during the processing of task $w^i$ the index $j$ takes values $t_{i-1}, \ldots, t_i$.
   \item The fraction of task $w^i$ that is executed in state $\tau_j$ is denoted by
   $\lambda^i_{j}$. Note that $\lambda^i_\ell=0$ for all $\ell \notin t_{i-1}, \ldots, t_i$,
            Note too that there may be $\ell,\ell'\in t_{i-1}, \ldots, t_i$, $|\ell-\ell'|\geq 2$, such that $\tau_\ell=\tau_{\ell'}$.
   \item The variable $\rho_j$ when dealing with task $i$ represents the work expended thus far in the $j$th position of $\tau$. {\sl I.e.},
   $$\rho_j = \sum_{i': t_{i'} \leq j \leq t_{i'+1}, i'\leq i} \lambda^{i'}_j w^{i'}_{\tau_j}.$$
 \end{itemize}

\begin{alg}[tb]
    \textbf{Input:} A traversal sequence $\tau$, a set of states $S$, the function ${d}$, and an online sequence of tasks $w_1,w_2,\ldots $.
    For each task $i$:

    $j \gets t_{i-1}$

    while ($\sum_{k=t_{i-1}}^j \lambda^i_{k} < 1$)
    \begin{itemize}
        \item $\lambda^i_{j} \gets \min\left\{\frac{d_{j,j+1}-\rho_{j}}{w^i_j},1-\sum_{k=t_{i-1}}^{j-1} \lambda^i_{k}\right\}$
        \item $\rho_{j} \gets \rho_{j} + \lambda^i_{j}\cdot w^i_{\tau_j}$
        \item if $ \rho_{j} = {d}_{\tau_j,\tau_{j+1}}$ then $j \gets j+1$, $\rho_{j}\gets 0$
    \end{itemize}
    $t_i \gets j$

    \caption{The Fractional Traversal Algorithm.} \label{alg:t}
\end{alg}

An illustration of the execution of Algorithm \ref{alg:t} is given in Appendix \ref{append:trav-ex}.


Given a traversal sequence $\tau=\tau_1,\tau_2,\ldots$, define the traversal distance between indices $\ell$ and $\ell'$ to be $$\delta_{\ell,\ell'}=\delta_{\ell',\ell}=\sum_{j=\min({\ell,\ell'})}^{\max({\ell,\ell'})-1}{d}_{\tau_j,\tau_{j+1}}.$$

    Note  that \begin{itemize}
\item The total work done by the fractional traversal algorithm after completing $n$ tasks is exactly $\delta_{t_0,t_n}+\rho_{t_n}$. This follows as the algorithm advances to the next state in the traversal sequence when the amount of work done in the current state is equal to the distance to the next state.
\item The total traversal cost to the fractional traversal algorithm for $n$ tasks is exactly $\delta_{t_0,t_n}$, every transition from $\tau_j$ to $\tau_{j+1}$ has transition cost $d_{j,j+1}.$
    \end{itemize}

    We now show that a Traversal Algorithm's performance is monotone with respects to the tasks it receives.
    \begin{lemma} \label{lem:travmon}
        Let $\tau$ be some traversal sequence, and let $\mathbf{w}=\langle w^1,\ldots,w^n\rangle$ and $\mathbf{\tilde{w}}=\langle\tilde{w}^1,\ldots \tilde{w}^n\rangle$ be two sequences of tasks such that for every $i\in \{1,\ldots, n\}$ and every $s \in S$, we have that $\tilde{w}^i_s\leq w^i_s$. The total work done by the traversal algorithm with respect to $\tau$ and $\mathbf{w}$ is at least the amount of work done with respect to $\tau$ and $\mathbf{\tilde{w}}$.
    \end{lemma}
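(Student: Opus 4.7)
The plan is to prove by induction on $i$ the stronger invariant that, writing $(t_i, \rho_{t_i})$ and $(\tilde{t}_i, \tilde{\rho}_{\tilde{t}_i})$ for the (position, residual) state of the fractional traversal algorithm on $\tau$ after processing the first $i$ tasks of $\mathbf{w}$ and $\mathbf{\tilde{w}}$ respectively, we have
\[
(t_i, \rho_{t_i}) \geq_{\mathrm{lex}} (\tilde{t}_i, \tilde{\rho}_{\tilde{t}_i}),
\]
under the lexicographic order that compares positions first and breaks ties by residual. This invariant implies the lemma at once: the total work done after $i$ tasks equals $\delta_{t_0, t_i} + \rho_{t_i}$, and since the algorithm always maintains $\rho < d_{\tau_j, \tau_{j+1}}$ at its current position $j$, a lex-larger state translates to a weakly larger total work.

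The base case $i=0$ is trivial. For the inductive step I rely on the following direct consequence of the algorithm: the fraction of task $i$ that it consumes to advance from state $(j, \rho)$ to position $j+k$ is
\[
\frac{d_{\tau_j, \tau_{j+1}} - \rho}{w^i_{\tau_j}} + \sum_{\ell=1}^{k-1} \frac{d_{\tau_{j+\ell}, \tau_{j+\ell+1}}}{w^i_{\tau_{j+\ell}}},
\]
which is non-increasing both in $\rho$ and in every task value $w^i_{\tau_{j+\ell}}$ along the way. Consequently, in the case $t_{i-1} = \tilde{t}_{i-1}$ (so $\rho_{t_{i-1}} \geq \tilde{\rho}_{\tilde{t}_{i-1}}$ by the inductive hypothesis), the $\mathbf{w}$-run consumes no more fraction than the $\mathbf{\tilde{w}}$-run to reach any position $t_{i-1} + k$, giving $t_i \geq \tilde{t}_i$; if $t_i = \tilde{t}_i$, the residual fraction left at the final position is at least as large for the $\mathbf{w}$-run, and multiplying it by the weakly larger rate $w^i_{\tau_{t_i}} \geq \tilde{w}^i_{\tau_{t_i}}$ yields $\rho_{t_i} \geq \tilde{\rho}_{\tilde{t}_i}$.

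If instead $t_{i-1} > \tilde{t}_{i-1}$, the nontrivial subcase is when the $\mathbf{\tilde{w}}$-run actually advances to or past position $t_{i-1}+1$ during task $i$. Then it must spend at least $d_{\tau_{t_{i-1}}, \tau_{t_{i-1}+1}} / \tilde{w}^i_{\tau_{t_{i-1}}}$ of its unit fraction budget just to cross the edge at $t_{i-1}$, and this already exceeds $(d_{\tau_{t_{i-1}}, \tau_{t_{i-1}+1}} - \rho_{t_{i-1}}) / w^i_{\tau_{t_{i-1}}}$---the fraction the $\mathbf{w}$-run needs to arrive at the state $(t_{i-1}+1, 0)$. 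So both runs reach $(t_{i-1}+1, 0)$ with the $\mathbf{w}$-run holding at least as much remaining fraction, and the monotonicity observation above then applies to the common suffix of $\tau$, yielding the lex inequality at task $i$. In the complementary subcase one has $\tilde{t}_i \leq t_{i-1} \leq t_i$, and the only tight situation $\tilde{t}_i = t_i$ forces $t_i = t_{i-1}$; since the $\mathbf{w}$-run then stayed at $t_{i-1}$ and spent its full fraction there, $\rho_{t_i} = \rho_{t_{i-1}} + w^i_{\tau_{t_{i-1}}} \geq w^i_{\tau_{t_{i-1}}} \geq \tilde{w}^i_{\tau_{t_{i-1}}} \geq \tilde{\rho}_{\tilde{t}_i}$.

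The main obstacle is the bookkeeping in the ``catch-up'' subcase, but it collapses into the monotonicity observation above once the $\mathbf{\tilde{w}}$-run's processing is split at the moment it reaches the $\mathbf{w}$-run's starting edge; from that moment on, both runs are at the same state with comparable remaining fractions, reducing the analysis to the earlier case.
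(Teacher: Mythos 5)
Your proof is correct and takes essentially the same approach as the paper's: both induct on a state-comparison invariant (your lexicographic order on $(t_i,\rho_{t_i})$ is equivalent to the paper's total-work invariant $\delta_{t_0,t_i}+\rho_{t_i}$, since $\rho$ always stays strictly below the distance to the next traversal edge), and both rest on the monotonicity of the $\lambda$-fraction consumed per position as a function of the residual and the task values. Your explicit split of the $\mathbf{\tilde{w}}$-run at the moment it re-synchronizes with the $\mathbf{w}$-run's starting edge is a slightly more transparent rendering of what the paper's terse assertion $\tilde{\lambda}^i_j \geq \lambda^i_j$ handles implicitly.
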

    \begin{proof}
        Let $t_i$ be the position of the traversal algorithm after processing task $i$ in the task sequence $\mathbf{w}$, and $\tilde{t}_i$ the position of the traversal algorithm after processing task $i$ in the task sequence $\mathbf{\tilde{w}}$. In addition, let $\rho_{t_i}$ be the amount of work done by the algorithm since index $t_i$ given sequence $\mathbf{w}$, and $\tilde{\rho}_{\tilde{t}_i}$ the corresponding amount of work since reaching $\tilde{t}_i$ and given task sequence $\mathbf{\tilde{w}}$.

We show by an inductive argument that for every $i\in \{0,\ldots, n\}$, $\delta_{t_0,t_i}+\rho_{t_i}\geq \delta_{\tilde{t}_0,\tilde{t}_i}+\tilde{\rho}_{\tilde{t}_i}$. This is clearly true for $i=0$. Let the claim hold before the arrival of the $i$-th task. If the traversal algorithm that processes sequence $\mathbf{\tilde{w}}$ doesn't reach $t_i$ when processing task $\tilde{w}^i$, ($\tilde{t}_i < t_i$) then
\begin{eqnarray*}
  \delta_{t_0,\tilde{t}_i}+\rho_{\tilde{t}_i} &\leq& \delta_{t_0,\tilde{t}_i +1} \\
  &\leq& \delta_{t_0,t_i} \\
  &\leq& \delta_{t_0,t_i}+\rho_{t_i}.
\end{eqnarray*}
where the 1st derivation above is by the definition of the algorithm, the 2nd is by assumption that $\tilde{t}_i < t_i$ and the 3rd is since $\rho_j\geq 0$ for all $j$.

Lets assume that $\tilde{t}_i \geq t_i$.  We'll show that it cannot be that $\tilde{t}_i > t_i$, and when $\tilde{t}_i=t_i$ we show
that $\tilde{\rho_{t_i}}\leq \rho_{t_i},$ thus concluding the inductive proof.

 Because $w^i_s\geq \tilde{w}^i_s$ for every state $s$, we have that for every $j\in \{t_{i-1},\ldots, t_{i}-1\}$, \begin{equation}\tilde{\lambda}^i_j\geq  \lambda^i_j,\label{eq:greaterlambda} \end{equation} where $\tilde{\lambda}^i_j$ and $\lambda^i_j$ are the fractions of tasks $\tilde{w}_i$ and $w^i$ done in $\tau_j$. This follows from the definition of $\lambda$ in the fractional traversal algorithm and the inductive hypothesis.

Therefore,
\begin{eqnarray} \tilde{\lambda}^i_{t_i} &\leq& 1 - \sum_{j=t_{i-1}}^{t_i-1} \tilde{\lambda}^i_j \label{eq:leftover} \\
 &\leq& 1 - \sum_{j=t_{i-1}}^{t_i-1} {\lambda}^i_j \label{eq:lambdasmall} \\
 &=& \lambda^i_{t_i}. \nonumber\end{eqnarray}
 Inequality (\ref{eq:leftover}) follows since the fraction left over cannot be more that one minus the fractions already done.
 Inequality (\ref{eq:lambdasmall}) follows from Equation (\ref{eq:greaterlambda}), and the final equality is by definition.

  Therefore, $$\tilde{\rho}_{t_i} = \tilde{\lambda}^i_{t_i}\cdot \tilde{w}_i\left(\tau_{t_i}\right) \leq \lambda^i_{t_i}\cdot w^i_{\tau_{t_i}} = \rho_{t_i} < {d}_{t_i,t_i+1},$$ so $\tilde{t}_i = t_i$, and we can conclude the induction.

    \qed\end{proof}
Lemma~\ref{lem:travmon} immediately implies that the total cost of the fractional traversal algorithm given task sequence $\mathbf{\tilde{w}}$ is smaller than the cost of the fractional traversal algorithm given task sequence $\mathbf{w}$. This is since  $\delta_{\tilde{t}_0,\tilde{t}_i}\leq\delta_{t_0,t_i}$ for every $i\in\{0,\ldots,n\}$.

\subsection{``Follow the Traversal" Algorithm} \label{sec:algdesc}

We would like to come up with a dynamic pricing scheme for metrical task systems for which agents driven by self interest follow the  traversal sequence.

    We now describe an new algorithm  which can be simulated by a dynamic pricing scheme that encourages greedy agents to ``follow"
the fractional traversal algorithm. The algorithm, for a given metrical task system (a set of states $S$ and a distance metric $d:S\times S\mapsto \Reals^+$) and traversal sequence $\tau$, receives and processes a sequence of tasks $w^1,w^2,\ldots$ online.

    Roughly speaking, after processing each task, the algorithm simulates the traversal algorithm presented in Section \ref{sec:travdesc}. When handling task ${i}$, the algorithm views all states between its current position in the sequence, and the fractional traversal algorithm's current position in the sequence, $t_{i-1}$, as a superstate where the cost of a task in a superstate is the min cost over its constituent states.  The details of the algorithm are given in Algorithm~\ref{alg:ftt}.

\begin{alg}[tb]
\textbf{Input:} A traversal sequence $\tau$, a set of states $S$, a distance metric $d$ and an online sequence of tasks $w^1,w^2,\ldots$.

Let $\ell_{i-1}$ denote the position of the algorithm in the traversal sequence after completing tasks $w^1,\ldots,w^{i-1}$ and $t_{i-1}$ be the position of the traversal algorithm after completing these tasks ($\ell_0=t_0=1$).

Given task $w^{i}$:
\begin{enumerate}
    \item Let $j_{\mathrm{min}}\gets\min(\ell_{i-1}, t_{i-1})$ \\ and $j_{\mathrm{max}}\gets\max(\ell_{i-1}, t_{i-1})$.
    \item Let $D_0\gets\{{j_{\mathrm{min}}},\dots, {j_{\mathrm{max}}}\}$.
    \item Define $\tilde{c}(j) =
    \begin{cases}
    w^{i}_{\tau_j} &j\in D_0\\
    w^{i}_{\tau_j}+\delta_{j_{\mathrm{max}},j} &j> j_{\mathrm{max}}\\
    \infty & \mbox{otherwise}
    \end{cases}$
    \item Set $\ell_{i}\gets\argmin_{j\geq j_{\mathrm{min}}}{\tilde{c}({j})}$.
\end{enumerate}
\caption{``Follow the traversal" algorithm that allows going to all the states between its current state and the current state of the traversal algorithm ``for free".}\label{alg:ftt}
\end{alg}

Even though there may be several indices that minimize $\tilde{c}$, tie breaking can be done arbitrarily, since all the claims hold for every index which minimizes $\tilde{c}$.

\subsection{Proof of Approximation}
\begin{theorem} \label{thm:ft_approx}
    For any traversal sequence $\tau$, distance metric $d$, and sequence of tasks $w^1,\ldots, w^n$, let $C^{A2}=w^{A2}+d^{A2}$ be the total cost of Algorithm \ref{alg:ftt}, and $C^T=w^T+d^T$ the total cost of the traversal algorithm for that sequence of tasks (the processing cost plus the transition cost). Then $C^{A2}\leq 2\cdot C^T$.
\end{theorem}
\begin{proof}
    We use a potential argument. The potential function we use is $\Phi=\delta_{\ell_{i},t_{i}}$, which is the traversal distance the current position of Algorithm \ref{alg:ftt} and the current position of the traversal algorithm. We show that for every incoming task ${i}$,
    \begin{eqnarray*}
        C^{A2}_{i}+\Delta\Phi_{i}\leq 2\cdot C^T_{i},
    \end{eqnarray*}
    where $C^{A2}_{i}$ and $C^T_{i}$ are the costs of Algorithm \ref{alg:ftt} and Algorithm \ref{alg:t} respectively when processing the ${i}$th task, and $\Delta\Phi_{i}$ is the change in potential after processing it.

Let $T_{i} = \{ j : t_{i-1} \leq j \leq t_{i} \}$ be the set of positions that the traversal algorithm visits while processing the ${i}$-th task, and let $\lambda^{i}_j$ be the amount of work done in $\tau_j$ in the traversal while processing the task. Since the traversal algorithm processes the whole task, $\sum_{j\in T_{i}}\lambda^{i}_j=1$.

Note that $$C^{T}_{i} = \delta_{t_{i-1},t_{i}} + \sum_{j\in T_{i}} \lambda^{i}_j\cdot w^{i}_{\tau_j},$$
and also that
\begin{equation}
\label{eq:jobft}
C^{A2}_{i} \leq \delta_{\ell_{i-1}, \ell_{i}} + w^{i}_{\tau_{\ell_{i}}},
\end{equation}
since Algorithm \ref{alg:ftt} may use a direct edge from $\tau_{\ell_{i-1}}$ to $\tau_{\ell_{i}}$.
By definition, the difference in the potential after processing the ${i}$th task is:
\begin{equation}
\label{eq:potchange}
\Delta \Phi_{i} = -\delta_{\ell_{i-1},t_{i-1}} + \delta_{\ell_{i},t_{i}}
\end{equation}

Let $I_{j}$ be an indicator as to whether $\tau_j \notin D_0$.
Algorithm \ref{alg:ftt} chooses an index that minimizes $\tilde{c}$, therefore
$w^{i}_{\tau_{\ell_{i}}} + \delta_{j_{max},\ell_{i}}\cdot I_{\ell_{i}}
\leq w^{i}_{\tau_{j}} + \delta_{j_{max},j} \cdot I_{j}$.
Hence, for any $j \in \{t_{i-1}, \dots, t_{i}\}$  we have
$w^{i}_{\tau_{\ell_{i}}} + \delta_{j_{max},\ell_{i}}\cdot I_{\ell+1}
\leq w^{i}_{\tau_{j}} + \delta_{j_{max},t_{i}} \cdot I_{t_{i}}$.
Since $\lambda^j_{i}$ is a convex combination we have:
\begin{eqnarray}
\label{eq:foltrav}
& & w^{i}_{\tau_{\ell_{i}}} + \delta_{j_{max},\ell_{i}}\cdot I_{\ell+1}
\leq \\
& &\sum_{j \in \{t_{i-1},\dots t_{i}\}} \lambda^{i}_j \cdot\left( w^{i}_{\tau_{j}} + \delta_{j_{max},t_{i}} \cdot I_{t_{i}}\right) =\nonumber\\
& &\sum_{j \in \{t_{i-1},\dots t_{i}\}} \lambda^{i}_j \cdot w^{i}_{\tau_{j}} + \delta_{j_{max},t_{i}} \cdot I_{t_{i}} \leq \nonumber\\
& & C^T_{i}.\nonumber
\end{eqnarray}

\newcommand{\eqpc}{{\stackrel{(\ref{eq:potchange})}{=}}}
\newcommand{\leqpc}{{\stackrel{(\ref{eq:potchange})}{\leq}}}
\newcommand{\leqjt}{{\stackrel{(\ref{eq:jobft})}{\leq}}}
\newcommand{\leqft}{{\stackrel{(\ref{eq:foltrav})}{\leq}}}

\textbf{Case 1:} $\ell_{i-1}\leq t_{i-1}$ and $\ell_{i} \leq t_{i-1}$
\begin{figure}[H]
\scalebox{.75}{ \begin{tikzpicture}[line cap=round,line join=round,>=triangle 45,x=1.0cm,y=0.5cm]
\clip(-4.300000000000001,0.8600000000000003) rectangle (6,3.600000000000002);
\draw [line width=2.0pt,domain=-4.300000000000001:6] plot(\x,{(--7.7104-0.0*\x)/3.1599999999999997});
\begin{scriptsize}
\draw [fill=qqqqff] (-2.0,2.44) circle (4pt);
\draw[color=qqqqff] (-1.9,3.1) node {\Large$\ell_{i-1}$};
\draw [fill=qqqqff] (1.86,2.44) circle (4pt);
\draw[color=qqqqff] (1.96,3.1) node {\Large$\ell_{i}$};
\draw [fill=xdxdff] (-0.38,2.44) circle (3pt);
\draw [fill=dcrutc] (3.16,2.44) circle (4pt);
\draw[color=dcrutc] (3.26,3.1) node {\Large$t_{i-1}$};
\draw [fill=xdxdff] (-3.4,2.44) circle (3pt);
\draw [fill=xdxdff] (3.6,2.44) circle (3pt);
\draw [fill=xdxdff] (5.38,2.44) circle (3pt);
\draw [fill=dcrutc] (4.76,2.44) circle (4pt);
\draw[color=dcrutc] (4.86,3.1) node {\Large$t_{i}$};
\draw [fill=xdxdff] (0.4,2.44) circle (3pt);
\end{scriptsize}
\end{tikzpicture}}
\end{figure}
In this case, $\Delta \Phi = -\delta_{\ell_{i-1},\ell_{i}} + \delta_{t_{i-1},t_{i}}$. Therefore,
\begin{eqnarray*}
    C^{A2}_{i} + \Delta \Phi & \leqjt & w^{i}_{\tau_{\ell_{i}}} + \delta_{\ell_{i-1}, \ell_{i}} + \Delta \Phi
    \eqpc w^{i}_{\tau_{\ell_{i}}} + \delta_{t_{i-1}, t_{i}}
    \\ & \leqft & C^T_{i} + \delta_{t_{i-1},t_{i}} \leq 2 C^T_{i}.
\end{eqnarray*}
\textbf{Case 2:} $\ell_{i-1}\leq t_{i-1}$ and $t_{i-1} < \ell_{i} \leq t_{i}$
\begin{figure}[H]
\scalebox{.75}{ \begin{tikzpicture}[line cap=round,line join=round,>=triangle 45,x=1.0cm,y=0.5cm]
\clip(-4.300000000000001,0.8600000000000003) rectangle (6,3.600000000000002);
\draw [line width=2.0pt,domain=-4.300000000000001:6] plot(\x,{(--7.7104-0.0*\x)/3.1599999999999997});
\begin{scriptsize}
\draw [fill=qqqqff] (-2.0,2.44) circle (4pt);
\draw[color=qqqqff] (-1.9,3.1) node {\Large$\ell_{i-1}$};
\draw [fill=dcrutc] (1.86,2.44) circle (4pt);
\draw[color=dcrutc] (1.96,3.1) node {\Large$t_{i-1}$};
\draw [fill=xdxdff] (-0.38,2.44) circle (3pt);
\draw [fill=qqqqff] (3.16,2.44) circle (4pt);
\draw[color=qqqqff] (3.26,3.1) node {\Large$\ell_{i}$};
\draw [fill=xdxdff] (-3.4,2.44) circle (3pt);
\draw [fill=xdxdff] (3.6,2.44) circle (3pt);
\draw [fill=xdxdff] (5.38,2.44) circle (3pt);
\draw [fill=dcrutc] (4.76,2.44) circle (4pt);
\draw[color=dcrutc] (4.86,3.1) node {\Large$t_{i}$};
\draw [fill=xdxdff] (0.4,2.44) circle (3pt);
\end{scriptsize}
\end{tikzpicture}}
\end{figure}
We have,
\begin{eqnarray*}
C^{A2}_{i} + \Delta \Phi &\leqjt& w^{i}_{\tau_{\ell_{i}}} + \delta_{\ell_{i-1}, t_{i-1}} + \delta_{t_{i-1}, \ell_{i}} + \Delta \Phi \\
&\eqpc&  w^{i}_{\tau_{\ell_{i}}} + \delta_{t_{i-1}, \ell_{i}} + \delta_{\ell_{i},t_{i}}\\
& \leqft & C^T_{i} + \delta_{\ell_{i},t_{i}}\\
&  \leq & 2 C^T_{i}.
\end{eqnarray*}
\textbf{Case 3:} $\ell_{i-1}\leq t_{i-1}$ and $t_{i} < \ell_{i}$
\begin{figure}[H]
\scalebox{.75}{ \begin{tikzpicture}[line cap=round,line join=round,>=triangle 45,x=1.0cm,y=0.5cm]
\clip(-4.300000000000001,0.8600000000000003) rectangle (6,3.600000000000002);
\draw [line width=2.0pt,domain=-4.300000000000001:6] plot(\x,{(--7.7104-0.0*\x)/3.1599999999999997});
\begin{scriptsize}
\draw [fill=qqqqff] (-2.0,2.44) circle (4pt);
\draw[color=qqqqff] (-1.9,3.1) node {\Large$\ell_{i-1}$};
\draw [fill=dcrutc] (1.86,2.44) circle (4pt);
\draw[color=dcrutc] (1.96,3.1) node {\Large$t_{i-1}$};
\draw [fill=xdxdff] (-0.38,2.44) circle (3pt);
\draw [fill=dcrutc] (3.16,2.44) circle (4pt);
\draw[color=dcrutc] (3.26,3.1) node {\Large$t_{i}$};
\draw [fill=xdxdff] (-3.4,2.44) circle (3pt);
\draw [fill=xdxdff] (3.6,2.44) circle (3pt);
\draw [fill=xdxdff] (5.38,2.44) circle (3pt);
\draw [fill=qqqqff] (4.76,2.44) circle (4pt);
\draw[color=qqqqff] (4.86,3.1) node {\Large$\ell_{i}$};
\draw [fill=xdxdff] (0.4,2.44) circle (3pt);
\end{scriptsize}
\end{tikzpicture}}
\end{figure}
\begin{eqnarray*}
    C^{A2}_{i} + \Delta \Phi &\leqjt& w^{i}_{\tau_{\ell_{i}}} + \delta_{\ell_{i-1}, t_{i-1}} + \delta_{t_{i-1}, \ell_{i}} + \Delta \Phi \\
    & \eqpc & w^{i}_{\tau_{\ell_{i}}} + \delta_{t_{i-1}, \ell_{i}} + \delta_{t_{i},\ell_{i}} \\
    & \leqft & C^T_{i} + \delta_{t_{i},\ell_{i}} \\
    & \leq & 2 C^T_{i}.
\end{eqnarray*}
\textbf{Case 4:} $t_{i-1} < \ell_{i-1}$ and $t_{i} \leq \ell_{i} \leq \ell_{i-1}$
\begin{figure}[H]
\scalebox{.75}{ \begin{tikzpicture}[line cap=round,line join=round,>=triangle 45,x=1.0cm,y=0.5cm]
\clip(-4.300000000000001,0.8600000000000003) rectangle (6,3.600000000000002);
\draw [line width=2.0pt,domain=-4.300000000000001:6] plot(\x,{(--7.7104-0.0*\x)/3.1599999999999997});
\begin{scriptsize}
\draw [fill=dcrutc] (-2.0,2.44) circle (4pt);
\draw[color=dcrutc] (-1.9,3.1) node {\Large$t_{i-1}$};
\draw [fill=dcrutc] (1.86,2.44) circle (4pt);
\draw[color=dcrutc] (1.96,3.1) node {\Large$t_{i}$};
\draw [fill=xdxdff] (-0.38,2.44) circle (3pt);
\draw [fill=qqqqff] (3.16,2.44) circle (4pt);
\draw[color=qqqqff] (3.26,3.1) node {\Large$\ell_{i}$};
\draw [fill=xdxdff] (-3.4,2.44) circle (3pt);
\draw [fill=xdxdff] (3.6,2.44) circle (3pt);
\draw [fill=xdxdff] (5.38,2.44) circle (3pt);
\draw [fill=qqqqff] (4.76,2.44) circle (4pt);
\draw[color=qqqqff] (4.86,3.1) node {\Large$\ell_{i-1}$};
\draw [fill=xdxdff] (0.4,2.44) circle (3pt);
\end{scriptsize}
\end{tikzpicture}}
\end{figure}
In this case, $\Delta \Phi = -\delta_{t_{i-1},t_{i}} - \delta_{\ell_{i},\ell_{i-1}}$. Hence,

$$C^{A2}_{i} + \Delta \Phi \leqjt w^{i}_{\tau_{\ell_{i}}} + \delta_{\ell_{i}, \ell_{i-1}} + \Delta \Phi \eqpc  w^{i}_{\tau_{\ell_{i}}}  - \delta_{t_{i-1},t_{i}} \leqft C^T_{i}.$$
\textbf{Case 5:} $t_{i-1} < \ell_{i-1}$ and $t_{i} \leq \ell_{i-1} < \ell_{i} $
\begin{figure}[H]
\scalebox{.75}{ \begin{tikzpicture}[line cap=round,line join=round,>=triangle 45,x=1.0cm,y=0.5cm]
\clip(-4.300000000000001,0.8600000000000003) rectangle (6,3.600000000000002);
\draw [line width=2.0pt,domain=-4.300000000000001:6] plot(\x,{(--7.7104-0.0*\x)/3.1599999999999997});
\begin{scriptsize}
\draw [fill=dcrutc] (-2.0,2.44) circle (4pt);
\draw[color=dcrutc] (-1.9,3.1) node {\Large$t_{i-1}$};
\draw [fill=dcrutc] (1.86,2.44) circle (4pt);
\draw[color=dcrutc] (1.96,3.1) node {\Large$t_{i}$};
\draw [fill=xdxdff] (-0.38,2.44) circle (3pt);
\draw [fill=qqqqff] (3.16,2.44) circle (4pt);
\draw[color=qqqqff] (3.26,3.1) node {\Large$\ell_{i-1}$};
\draw [fill=xdxdff] (-3.4,2.44) circle (3pt);
\draw [fill=xdxdff] (3.6,2.44) circle (3pt);
\draw [fill=xdxdff] (5.38,2.44) circle (3pt);
\draw [fill=qqqqff] (4.76,2.44) circle (4pt);
\draw[color=qqqqff] (4.86,3.1) node {\Large$\ell_{i}$};
\draw [fill=xdxdff] (0.4,2.44) circle (3pt);
\end{scriptsize}
\end{tikzpicture}}
\end{figure}
In this case, $\Delta \Phi = -\delta_{t_{i-1},t_{i}} + \delta_{\ell_{i-1},\ell_{i}}$.
Therefore,
$$C^{A2}_{i} + \Delta \Phi \leqjt w^{i}_{\tau_{\ell_{i}}} + \delta_{\ell_{i-1}, \ell_{i}} + \Delta \Phi \leqpc  w^{i}_{\tau_{\ell_{i}}} + 2 \delta_{\ell_{i-1}, \ell_{i}} \leqft 2 C^T_{i}.$$
\textbf{Case 6:} $t_{i-1} < \ell_{i-1}$ and $\ell_{i} < t_{i} \leq \ell_{i-1}$
\begin{figure}[H]
\scalebox{.75}{ \begin{tikzpicture}[line cap=round,line join=round,>=triangle 45,x=1.0cm,y=0.5cm]
\clip(-4.300000000000001,0.8600000000000003) rectangle (6,3.600000000000002);
\draw [line width=2.0pt,domain=-4.300000000000001:6] plot(\x,{(--7.7104-0.0*\x)/3.1599999999999997});
\begin{scriptsize}
\draw [fill=dcrutc] (-2.0,2.44) circle (4pt);
\draw[color=dcrutc] (-1.9,3.1) node {\Large$t_{i-1}$};
\draw [fill=qqqqff] (1.86,2.44) circle (4pt);
\draw[color=qqqqff] (1.96,3.1) node {\Large$\ell_{i}$};
\draw [fill=xdxdff] (-0.38,2.44) circle (3pt);
\draw [fill=dcrutc] (3.16,2.44) circle (4pt);
\draw[color=dcrutc] (3.26,3.1) node {\Large$t_{i}$};
\draw [fill=xdxdff] (-3.4,2.44) circle (3pt);
\draw [fill=xdxdff] (3.6,2.44) circle (3pt);
\draw [fill=xdxdff] (5.38,2.44) circle (3pt);
\draw [fill=qqqqff] (4.76,2.44) circle (4pt);
\draw[color=qqqqff] (4.86,3.1) node {\Large$\ell_{i-1}$};
\draw [fill=xdxdff] (0.4,2.44) circle (3pt);
\end{scriptsize}
\end{tikzpicture}}
\end{figure}
In this case, $\Delta \Phi = -\delta_{t_{i-1},\ell_{i}} - \delta_{t_{i},\ell_{i-1}}$. Therefore,
\begin{eqnarray*}
C^{A2}_{i} + \Delta \Phi &\leqjt& w^{i}_{\tau_{\ell_{i}}} + \delta_{\ell_{i}, \ell_{i-1}} + \Delta \Phi \\
& = & w^{i}_{\tau_{\ell_{i}}}  + \delta_{\ell_{i},t_{i}} + \delta_{t_{i},\ell_{i-1}} + \Delta \Phi \\
&\leqpc& w^{i}_{\tau_{\ell_{i}}} + \delta_{\ell_{i},t_{i}}  \\
& \leqft& 2 C^T_{i}.
\end{eqnarray*}
\textbf{Case 7:} $t_{i-1} < \ell_{i-1}$ and $\ell_{i} < \ell_{i-1} < t_{i}$
\begin{figure}[H]
\scalebox{.75}{ \begin{tikzpicture}[line cap=round,line join=round,>=triangle 45,x=1.0cm,y=0.5cm]
\clip(-4.300000000000001,0.8600000000000003) rectangle (6,3.600000000000002);
\draw [line width=2.0pt,domain=-4.300000000000001:6] plot(\x,{(--7.7104-0.0*\x)/3.1599999999999997});
\begin{scriptsize}
\draw [fill=dcrutc] (-2.0,2.44) circle (4pt);
\draw[color=dcrutc] (-1.9,3.1) node {\Large$t_{i-1}$};
\draw [fill=qqqqff] (1.86,2.44) circle (4pt);
\draw[color=qqqqff] (1.96,3.1) node {\Large$\ell_{i}$};
\draw [fill=xdxdff] (-0.38,2.44) circle (3pt);
\draw [fill=qqqqff] (3.16,2.44) circle (4pt);
\draw[color=qqqqff] (3.26,3.1) node {\Large$\ell_{i-1}$};
\draw [fill=xdxdff] (-3.4,2.44) circle (3pt);
\draw [fill=xdxdff] (3.6,2.44) circle (3pt);
\draw [fill=xdxdff] (5.38,2.44) circle (3pt);
\draw [fill=dcrutc] (4.76,2.44) circle (4pt);
\draw[color=dcrutc] (4.86,3.1) node {\Large$t_{i}$};
\draw [fill=xdxdff] (0.4,2.44) circle (3pt);
\end{scriptsize}
\end{tikzpicture}}
\end{figure}
In this case, $\Delta \Phi = -\delta_{\ell_{i},\ell_{i-1}} - \delta_{t_{i-1},\ell_{i}}
+ \delta_{\ell_{i},t_{i}} \leq -\delta_{\ell_{i},\ell_{i-1}} + \delta_{t_{i-1},t_{i}}$. Therefore,

\begin{align*}
C^{A2}_{i} + \Delta \Phi &\leqjt w^{i}_{\tau_{\ell_{i}}} + \delta_{\ell_{i}, \ell_{i-1}} + \Delta \Phi \leqpc  w^{i}_{\tau_{\ell_{i}}}  +  \delta_{t_{i-1},t_{i}} \\
&\leqft 2 C^T_{i} \enspace.
\end{align*}

\textbf{Case 8:} $t_{i-1} < \ell_{i-1}$ and $\ell_{i-1} \leq \ell_{i}  < t_{i}$
\begin{figure}[H]
\scalebox{.75}{ \begin{tikzpicture}[line cap=round,line join=round,>=triangle 45,x=1.0cm,y=0.5cm]
\clip(-4.300000000000001,0.8600000000000003) rectangle (6,3.600000000000002);
\draw [line width=2.0pt,domain=-4.300000000000001:6] plot(\x,{(--7.7104-0.0*\x)/3.1599999999999997});
\begin{scriptsize}
\draw [fill=dcrutc] (-2.0,2.44) circle (4pt);
\draw[color=dcrutc] (-1.9,3.1) node {\Large$t_{i-1}$};
\draw [fill=qqqqff] (1.86,2.44) circle (4pt);
\draw[color=qqqqff] (1.96,3.1) node {\Large$\ell_{i-1}$};
\draw [fill=xdxdff] (-0.38,2.44) circle (3pt);
\draw [fill=qqqqff] (3.16,2.44) circle (4pt);
\draw[color=qqqqff] (3.26,3.1) node {\Large$\ell_{i}$};
\draw [fill=xdxdff] (-3.4,2.44) circle (3pt);
\draw [fill=xdxdff] (3.6,2.44) circle (3pt);
\draw [fill=xdxdff] (5.38,2.44) circle (3pt);
\draw [fill=dcrutc] (4.76,2.44) circle (4pt);
\draw[color=dcrutc] (4.86,3.1) node {\Large$t_{i}$};
\draw [fill=xdxdff] (0.4,2.44) circle (3pt);
\end{scriptsize}
\end{tikzpicture}}
\end{figure}
In this case, $\Delta \Phi = -\delta_{t_{i-1},\ell_{i-1}} + \delta_{\ell_{i},t_{i}}
\leq \delta_{t_{i-1},t_{i}}$. Therefore,
\begin{eqnarray*}
    C^{A2}_{i} + \Delta \Phi &\leqjt& w^{i}_{\tau_{\ell_{i}}} + \delta_{\ell_{i-1}, \ell_{i}} + \Delta \Phi \\
    &\leqpc&  w^{i}_{\tau_{\ell_{i}}} + \delta_{\ell_{i-1}, \ell_{i}} +  \delta_{t_{i-1},t_{i}}\\
    & \leqft& 2 C^T_{i}.
\end{eqnarray*}
\textbf{Case 9:} $t_{i-1} < \ell_{i-1}$ and $\ell_{i-1} \leq t_{i} \leq \ell_{i}$
\begin{figure}[H]
\scalebox{.75}{ \begin{tikzpicture}[line cap=round,line join=round,>=triangle 45,x=1.0cm,y=0.5cm]
\clip(-4.300000000000001,0.8600000000000003) rectangle (6,3.600000000000002);
\draw [line width=2.0pt,domain=-4.300000000000001:6] plot(\x,{(--7.7104-0.0*\x)/3.1599999999999997});
\begin{scriptsize}
\draw [fill=dcrutc] (-2.0,2.44) circle (4pt);
\draw[color=dcrutc] (-1.9,3.1) node {\Large$t_{i-1}$};
\draw [fill=qqqqff] (1.86,2.44) circle (4pt);
\draw[color=qqqqff] (1.96,3.1) node {\Large$\ell_{i-1}$};
\draw [fill=xdxdff] (-0.38,2.44) circle (3pt);
\draw [fill=dcrutc] (3.16,2.44) circle (4pt);
\draw[color=dcrutc] (3.26,3.1) node {\Large$t_{i}$};
\draw [fill=xdxdff] (-3.4,2.44) circle (3pt);
\draw [fill=xdxdff] (3.6,2.44) circle (3pt);
\draw [fill=xdxdff] (5.38,2.44) circle (3pt);
\draw [fill=qqqqff] (4.76,2.44) circle (4pt);
\draw[color=qqqqff] (4.86,3.1) node {\Large$\ell_{i}$};
\draw [fill=xdxdff] (0.4,2.44) circle (3pt);
\end{scriptsize}
\end{tikzpicture}}
\end{figure}
In this case, $\Delta \Phi = -\delta_{t_{i-1},\ell_{i-1}} + \delta_{t_{i},\ell_{i}}
\leq \delta_{\ell_{i-1},\ell_{i}}$. Therefore,
\begin{eqnarray*}
C^{A2}_{i} + \Delta \Phi & \leqjt & w^{i}_{\tau_{\ell_{i}}} + \delta_{\ell_{i-1}, \ell_{i}} + \Delta \Phi\\
& \leqpc & w^{i}_{\tau_{\ell_{i}}} + 2 \delta_{\ell_{i-1}, \ell_{i}} \\
& \leqft & 2 C^T_{i}.
\end{eqnarray*}
\qed\end{proof}

\subsection{Pricing Scheme} \label{sec:mtsprice}
In this section, we define a dynamic pricing scheme that observes decisions made by greedy agents and sets a surcharge for the next agent to arrive.
We want to set prices so that states chosen by (greedy) agents are the same as the ones chosen by Algorithm~\ref{alg:ftt} (up to tie breaking) given the sequence of tasks $\mathbf{w}$.

The problem is that dynamic pricing schemes, as we've defined them, can only observe the greedy behavior of agents. the state chosen by the agents, and the amount of work done in that state while processing the task; Namely, after agent ${i}$ has selfishly used the system, the mechanism observes $s_{i}$, the state chosen by agent ${i}$, and $w^i_{s_{i}}$, the amount of work done in $s_{i}$ while processing the task, but the other coordinates of $w^{i}$ $(w^i_1,\ldots,w^i_{s_{i}-1},w^i_{s_{i}+1},\ldots,w^i_m)$ are not known to it.

We resolve this issue by giving the pricing scheme a different (imaginary) sequence of tasks $\mathbf{\tilde{w}}$ with the following \emph{consistency} properties:
\begin{enumerate}
\item For every ${i}$, $w^i_{s_{i}}$ is equal to $\tilde{w}^i_{s_{i}}$.
\item For every ${i}$ and for every $j\in S\setminus\{s_{i}\}$, $w^i_j\geq \tilde{w}^i_j$.
\item For every ${i}$, given the pricing function $P_{i}$ over the state of the system after observing events $1,\ldots, {i}-1$, the computed vector $\tilde{w}^{i}$ is consistent with the agent selfishly choosing to process her task in state $s_{i}$; I.e., for all $j$
\end{enumerate}
\begin{equation}\label{prop3}
\tilde{w}^i_j+d_{s_{i-1},j}+P_{i}(j)\geq {w}^i_{s_{i}}+d_{s_{i-1},s_i}+P_{i}(s_{i}) \enspace.
\end{equation}

Note that running Algorithm~\ref{alg:ftt} on the sequence of tasks $\mathbf{\tilde{w}}$ yields a 2-approximation to the cost of the Fractional Traversal Algorithm on the sequence of tasks $\mathbf{w}$: by Lemma~\ref{lem:travmon}, the cost of the Fractional Traversal Algorithm on $\mathbf{\tilde{w}}$ is no larger than its cost on $\mathbf{w}$, and by Theorem~\ref{thm:ft_approx}, Algorithm~\ref{alg:ftt} gives a 2-approximation to the cost of the Fractional Traversal Algorithm (on any sequence, in particular $\mathbf{\tilde{w}}$).  Therefore, it suffices to find a pricing scheme that makes the agents follow Algorithm~\ref{alg:ftt} on sequence $\mathbf{\tilde{w}}$.

We show how to find $\mathbf{\tilde{w}}$ that satisfies the consistency properties in Section~\ref{sec:lbdesc}. From now on, we assume such $\tilde{w}^{i-1}$ can be computed after observing agent ${i}$'s greedy behavior. We show how to devise a pricing scheme that given $\mathbf{\tilde{w}}$, makes incoming agents act \emph{as though} they are being guided by Algorithm \ref{alg:ftt} given input $\mathbf{\tilde{w}}$. We describe how to do this in Pricing Scheme \ref{pricing:ftt}. We note that although Pricing Scheme \ref{pricing:ftt} may price states using negative number, we can increase all prices by a large enough constant without affecting the decisions of the agents.

\begin{pricing}[tb]
\textbf{Input:} A traversal sequence $\tau$, a set of states $S$, a distance metric $d$ and an observed sequence of decisions $(s_1,w^1_{s_1}), (s_2,w^2_{s_2}), \ldots, (s_{i-1},w^i_{s_{i-1}})$,  made by agents $1, 2, \ldots, {i-1}$.

            The variables $\ell_{i-1}$ and $t_{i-1}$ represent two indices in the traversal sequence, set after the ${i-1}$th agent's decision. For ${i-1}=0$ take $\ell_0\gets 1$ and $t_0\gets 1$. \\

            \textbf{Setting prices for agent ${i}$:}
            \begin{enumerate}
                \item Let $j_{\mathrm{min}}\gets\min(\ell_{i-1}, t_{i-1})$ \\ and $j_{\mathrm{max}}\gets\max(\ell_{i-1}, t_{i-1})$.
                \item Let $D_0\gets\{{j_{\mathrm{min}}},\dots, {j_{\mathrm{max}}}\}$.
                \item For a state $s$ let $m(s) \gets \min\{j\geq \jmin| \tau_j = s \}$.
                \item Define $P_{i}(s)\equiv
                \begin{cases}
                -d_{s_{i-1},s} & m(s) \in D_0,\\
                -d_{s_{i-1},s} + \delta_{j_{\mathrm{max}}, m(s)} &\mbox{otherwise.}\\
                \end{cases}$
            \end{enumerate}
            \textbf{Determining $\ell_{i}$ and $t_{i}$ after observing $(s_{i},w^{i}_{s_{i}})$:}
            \begin{enumerate}
                \item Set $\ell_{i}\gets m(s_{i})$.
                \item Compute $\tilde{w}^{i}$ using the set of equations (\ref{eq:lbtask}) as described in Section \ref{sec:lbdesc}.
                \item Update $t_{i}$ as done by the Fractional Traversal Algorithm (Algorithm \ref{alg:t}) using $\tilde{w}^{i}$ as input.
            \end{enumerate}
        \caption{A dynamic pricing scheme that makes incoming agents ``follow" Algorithm \ref{alg:ftt}.}\label{pricing:ftt}
    \end{pricing}

    \begin{theorem}
        Pricing Scheme \ref{pricing:ftt} makes greedy agents simulate Algorithm \ref{alg:ftt} with task sequence $\mathbf{\tilde{w}}$ given as input.
    \end{theorem}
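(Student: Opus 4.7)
The plan is to verify that when the pricing scheme sets prices according to Pricing Scheme~\ref{pricing:ftt}, the state a greedy agent chooses corresponds, via the map $s \mapsto m(s)$, to the index Algorithm~\ref{alg:ftt} would choose on the input $\mathbf{\tilde{w}}$. I would proceed by (i) rewriting the greedy agent's objective in terms of $\mathbf{\tilde{w}}$ and $\tilde{c}$, (ii) establishing an algebraic correspondence between the agent's state-minimization and Algorithm~\ref{alg:ftt}'s index-minimization, and (iii) invoking the tie-breaking remark after Algorithm~\ref{alg:ftt} to conclude that the two choices coincide.

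First I would use the consistency properties of $\mathbf{\tilde{w}}$ (Section~\ref{sec:lbdesc}, in particular (\ref{prop3})) to replace $w^{i}$ by $\tilde{w}^{i}$ in the agent's minimization: although a greedy agent minimizes $w^{i}_{s} + d_{s_{i-1},s} + P_{i}(s)$, property (3) says that the realized choice $s_{i}$ is also a minimizer of $\tilde{w}^{i}_{s} + d_{s_{i-1},s} + P_{i}(s)$. Hence it suffices to show that Pricing Scheme~\ref{pricing:ftt} makes the hypothetical ``$\tilde{w}^{i}$-agent'' pick exactly what Algorithm~\ref{alg:ftt} would pick on $\mathbf{\tilde{w}}$.

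Next I would substitute the explicit form of $P_{i}(s)$. The key observation is that the $-d_{s_{i-1},s}$ term in $P_{i}(s)$ exactly cancels the transition cost $d_{s_{i-1},s}$ in the agent's objective, so that for every state $s$,
\[
\tilde{w}^{i}_{s} + d_{s_{i-1},s} + P_{i}(s) \;=\;
\begin{cases}
\tilde{w}^{i}_{\tau_{m(s)}} & \text{if } m(s)\in D_0,\\
\tilde{w}^{i}_{\tau_{m(s)}} + \delta_{j_{\max}, m(s)} & \text{if } m(s) > j_{\max}.
\end{cases}
\]
Comparing to the definition of $\tilde{c}$ in Algorithm~\ref{alg:ftt}, the right-hand side equals $\tilde{c}(m(s))$. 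Thus the $\tilde{w}^{i}$-agent is effectively minimizing $\tilde{c}(m(s))$ over $s \in S$, whereas Algorithm~\ref{alg:ftt} minimizes $\tilde{c}(j)$ over $j \geq \jmin$.

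To align these two minimizations I would argue that restricting the index-minimization to indices of the form $j = m(s)$ is without loss. For any $j \geq \jmin$, let $s = \tau_{j}$; then $m(s) \leq j$, and because $\delta_{j_{\max},\cdot}$ is monotone in its second argument past $j_{\max}$ while $\tilde{w}^{i}_{\tau_{j}} = \tilde{w}^{i}_{\tau_{m(s)}}$, one verifies case-by-case (whether $m(s)$ and $j$ lie in $D_0$ or past $j_{\max}$) that $\tilde{c}(m(s)) \leq \tilde{c}(j)$. Therefore $\min_{j \geq \jmin} \tilde{c}(j) = \min_{s \in S} \tilde{c}(m(s))$, and any minimizer produced by the agent yields, via $s \mapsto m(s)$, a valid minimizer for Algorithm~\ref{alg:ftt}. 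The update rule $\ell_{i} \gets m(s_{i})$ then matches Algorithm~\ref{alg:ftt}'s step~4, while the subsequent update of $t_{i}$ is by definition the one Algorithm~\ref{alg:t} would make on $\tilde{w}^{i}$, so the simulation proceeds consistently for all future rounds by induction on $i$.

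I expect the main subtlety to be the case analysis in step~(iii): one must carefully handle the possibility that $m(s)$ and competing indices $j$ lie on different sides of $j_{\max}$, and invoke the remark that Algorithm~\ref{alg:ftt}'s ties may be broken arbitrarily so that the specific minimizer selected by the agent is an admissible choice. Formally the entire argument is an induction on $i$, with the inductive hypothesis being that both $\ell_{i-1}$ and $t_{i-1}$ agree with the values Algorithm~\ref{alg:ftt} would have maintained on the sequence $\tilde{w}^{1},\ldots,\tilde{w}^{i-1}$; the base case $\ell_{0}=t_{0}=1$ is immediate, and the inductive step is exactly the argument above.
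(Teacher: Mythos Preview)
Your proposal is correct and follows essentially the same approach as the paper: cancel the transition cost via the $-d_{s_{i-1},s}$ term in $P_i$ to identify the agent's disutility with $\tilde{c}(m(s))$, use the consistency property~(\ref{prop3}) together with $\tilde{w}^i_{s_i}=w^i_{s_i}$ to see that $s_i$ also minimizes the $\tilde{w}^i$-objective, observe that $m(s)$ already attains $\min_{j:\tau_j=s}\tilde{c}(j)$ so restricting to indices of the form $m(s)$ is without loss, and finally invoke arbitrary tie-breaking in Algorithm~\ref{alg:ftt}. Your write-up is slightly more explicit about the inductive structure and the case analysis in step~(iii), but the argument is the same.
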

    \begin{proof}
        For every state $s'\in S$, let $m(s')$ be as defined in Pricing Scheme \ref{pricing:ftt} and let $\tilde{c}$ be as defined in Algorithm \ref{alg:ftt}. Since $m(s')\in \argmin_{j:\tau_j=s'}\tilde{c}(j)$, we may assume that Algorithm \ref{alg:ftt} always prefers $m(s')$ amongst the indices for which $\tau_j=s'$.

        Let $s$ be the state chosen by the agent.
         If $m(s)\in D_0$, then the disutility of the agent is $w^{i}_s + d_{s_{i-1},s} + P_{i}(s) = w^{i}_s+ d_{s_{i-1},s} - d_{s_{i-1},s} = \tilde{c}(m(s))$. If $m(s)\notin D_0$, then now the disutility is $w^{i}_s + d_{s_{i-1},s} + P_{i}(s) = w^{i}_s + d_{s_{i-1},s} - d_{s_{i-1},s} + \delta_{j_{\mathrm{max}},m(s)} = \tilde{c}({m(s)})$.

        By the way $\tilde{w}^{i}$ is computed, we get that if the algorithm would have received $\tilde{w}^{i}$ as input, then given $\tilde{s}\in S$ that was not chosen by agent ${i}$, we have that,
        \begin{itemize}
            \item if $m(\tilde{s})\in D_0$, then
            \begin{eqnarray*}
            \tilde{c}({m(\tilde{s})}) & = & \tilde{w}^{i}_{\tilde{s}}+ d_{s_{i-1},\tilde{s}} - d_{s_{i-1},\tilde{s}}\\
            & = & \tilde{w}^{i}_{\tilde{s}}+ d_{s_{i-1},\tilde{s}} + P_{i}(\tilde{s})\\
            &\geq& w^{i}_{s}+ d_{s_{i-1},s} + P_{i}(s)\\
            & = & \tilde{c}({m(s)}),
            \end{eqnarray*}
            where the third inequality is derived from (\ref{prop3}), and
            \item if $m(\tilde{s})\notin D_0$, then
            \begin{eqnarray*}
            \tilde{c}({m(\tilde{s})})& = &\tilde{w}^{i}_{\tilde{s}} + d_{s_{i-1},\tilde{s}} - d_{s_{i-1},\tilde{s}} + \delta_{j_{\mathrm{max}},m(\tilde{s})}\\
            &=& \tilde{w}^{i}_{\tilde{s}} + d_{s_{i-1},\tilde{s}} + P_{i}(\tilde{s})\\
            &\geq& w^{i}_{s} + d_{s_{i-1},s} + P_{i}(s) \\
            &=& \tilde{c}({m(s)}),
            \end{eqnarray*}
            where the third inequality is derived from (\ref{prop3}).
        \end{itemize}

        Therefore, the agent picks a state $s$ such that $m(s)$ minimizes $\tilde{c}$ given task $\tilde{w}^{i}$. Algorithm \ref{alg:ftt} also chooses an index that minimizes $\tilde{c}$. Since the tie breaking of Algorithm \ref{alg:ftt} can be done arbitrarily, we can assume that the algorithm chooses index $m(s)$ as well when given task $\tilde{w}^{i}$.
    \qed\end{proof}
    \begin{corollary}
        Pricing Scheme \ref{pricing:ftt} yields a price of anarchy of $16(m-1)$.
    \end{corollary}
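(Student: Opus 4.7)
The plan is to chain together four facts already established in the excerpt: the preceding theorem on Pricing Scheme~\ref{pricing:ftt}, Theorem~\ref{thm:ft_approx}, Lemma~\ref{lem:travmon}, and the $8(m-1)$-competitiveness of the fractional traversal algorithm due to Borodin et al. I would begin by fixing a traversal sequence $\tau$ for which the fractional traversal algorithm achieves competitive ratio $8(m-1)$, as guaranteed by that theorem.

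The next step is to translate the equilibrium behavior of the agents into the behavior of Algorithm~\ref{alg:ftt}. By the preceding theorem, under any tie-breaking each agent $i$ moves to $s_i = \tau_{\ell_i}$, where $\ell_i$ is the index that Algorithm~\ref{alg:ftt} would pick when fed the imaginary sequence $\mathbf{\tilde{w}}$. The \emph{social} cost contribution of agent $i$ is $w^i_{s_i} + d_{s_{i-1}, s_i}$, and by consistency property~(1) we have $w^i_{s_i} = \tilde{w}^i_{s_i}$, so this is exactly the cost that Algorithm~\ref{alg:ftt} incurs at step $i$ on input $\mathbf{\tilde{w}}$. Summing over $i$, the total social cost equals $C^{A2}(\mathbf{\tilde{w}})$.

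From there it is a short chain: Theorem~\ref{thm:ft_approx} gives $C^{A2}(\mathbf{\tilde{w}}) \le 2 C^T(\mathbf{\tilde{w}})$; consistency property~(2), $\tilde{w}^i_j \le w^i_j$ for all $i,j$, combined with Lemma~\ref{lem:travmon} gives $C^T(\mathbf{\tilde{w}}) \le C^T(\mathbf{w})$; and the choice of $\tau$ gives $C^T(\mathbf{w}) \le 8(m-1)\cdot\opt(\mathbf{w})$. Multiplying these together yields a social cost at most $16(m-1)\cdot\opt(\mathbf{w})$, valid for every equilibrium, which is the claimed bound on the price of anarchy.

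The only thing I would be careful about is the distinction between an agent's disutility and the social cost: the surcharges $P_i(s)$ enter the disutility but not the social cost (they are pure transfers), so they determine which state gets chosen but drop out of the welfare accounting. That, together with consistency~(1), is what lets us replace the real task vectors $\mathbf{w}$ by the imaginary ones $\mathbf{\tilde{w}}$ when evaluating the price of anarchy; no other step requires real effort beyond composing the inequalities.
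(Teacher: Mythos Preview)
Your proposal is correct and follows essentially the same chain of inequalities that the paper lays out (in the paragraph preceding Pricing Scheme~\ref{pricing:ftt} and in the theorem just before the corollary): agents simulate Algorithm~\ref{alg:ftt} on $\mathbf{\tilde{w}}$, Theorem~\ref{thm:ft_approx} gives the factor~$2$, Lemma~\ref{lem:travmon} plus consistency~(2) passes from $\mathbf{\tilde{w}}$ to $\mathbf{w}$, and Borodin et~al.\ supplies the $8(m-1)$. Your explicit remark that consistency~(1) is what identifies the agents' social cost with $C^{A2}(\mathbf{\tilde{w}})$ is a useful clarification that the paper leaves implicit.
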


\subsection{Computing ``Imaginary" Tasks}\label{sec:lbdesc}

As mentioned in Section \ref{sec:mtsprice}, Pricing Scheme \ref{pricing:ftt} cannot observe the real sequence of task vectors $\mathbf{w}$. This is problematic, since the pricing scheme runs a simulation of the fractional traversal algorithm, which takes the task vectors as input.

To resolve this issue, we need to compute imaginary task vectors with the three \emph{consistency} properties, as described in Section \ref{sec:mtsprice}. We are given a pricing function $P_{i}:S\mapsto \Reals^+$ that is set on states of the task system after the arrival of the ${i-1}$-th agent, a state $s_{i}$ chosen by the ${i}$-th agent, and the amount of work done in that state while processing the tasks (which is exactly $w^{i}_{s_{i}}$). Let $w^{i}=\left(w^{i}_1,w^{i}_2,\ldots,w^{i}_m\right)$ be the current incoming task (which we cannot observe). We claim that the vector $\tilde{w}^{i}$ defined below is coordinate-wise dominated by $w^{i}$ and consistent with the agent greedily choosing to process the task in $s_{i}$ while the states are priced using $P_{i}$:
\begin{align}\label{eq:lbtask}
    {\tilde{w}}^{i}_{j} := &\max\{0,\\ &w^{i}_{s_{i}}+d_{s_{i-1},s_{i}}+P_{i}\left(s_{i}\right)-d_{s_{i-1},j}-P_{i}(j)\}\nonumber
\end{align}
for all $j \in S$.  Note that ${\tilde{w}}^{i}_{s_{i}}={{w}}^{i}_{s_{i}}$.

By definition, ${w}^{i}_j \geq 0$, and since the agent chooses $s_{i}$, we have that ${w}^{i}_j+d_{s_{i-1},j}+P_{i}(j)\geq w^{i}_{s_{i}}+d_{s_{i-1},s_{i}}+P_{i}\left(s_{i}\right)$
, i.e.,
${w}^{i}_j \geq w^{i}_{s_{i}}+d_{s_{i-1},s_{i}}+P_{i}\left(s_{i}\right)-d_{s_{i-1},j}-P_{i}(j)$.
Therefore, it holds that ${w}^{i}_j \geq \tilde{w}^{i}_j$.
In addition,  since $\tilde{w}^{i}_{j}\geq w^{i}_{s_{i}}+d_{s_{i-1},s_{i}}+P_{i}\left(s_{i}\right)-d_{s_{i-1},j}-P_{i}(j)$, we have that $\tilde{w}^{i}_{j}+d_{s_{i-1},j}+P_{i}(j)\geq w^{i}_{s_{i}}+d_{s_{i-1},s_{i}}+P_{i}\left(s_{i}\right)$ as required in (\ref{prop3}). We conclude that we are able to compute a task vector which satisfies the desired \emph{consistency} properties, thus enabling us to devise pricing over states that insures a nearly optimal price of anarchy.

\section{Pricing Servers on Trees}
\renewcommand{\cal}[1]{\mathcal{#1}}
\newcommand{\loc}{local}
\newcommand{\laz}{lazy}
\newcommand{\m}{monotone}
\newcommand{\DC}{DC}
\newcommand{\St}[1]{S_{#1}}
\newcommand{\srv}{s}
\newcommand{\M}{\mathcal{M}}


The $k$-server problem,
introduced by Manasse~et~al.~\cite{ManasseMS90},
is defined as follows.  Let $(V,d)$ be a fixed metric space.
There are $k$ servers, initially located in specified points $S^0 = \{ s^0_1,s^0_2,\ldots,s^0_k \}$ s.t. for  $1\leq j \leq k$, $s^0_j \in V$, that can be moved
by the algorithm.  Moving a server from $x$ to $y$ costs $d(x,y)$.
The goal is to serve a sequence of arriving requests with minimum possible cost.
Each request $r_i$ is a point in $V$, and to serve it, the algorithm has to move (at least) one server to that point.  

In our setting, we assume that the requests are made by successive agents.
Moreover, we assume that each agent is greedy, and will therefore ask that
their request is served with the server that minimizes the sum of its
distance to the request and the price we set for the server. Therefore,
before the arrival of the $i$th agent, the algorithm sets a payment function for the servers: $P_i\colon S^{i-1} \mapsto \Reals $, and agent $i$ would choose to move server:
$$ \ell_i \in \argmin_j \{ d(r_i,s^{i-1}_j) + P_i(s^{i-1}_j)\},$$
and the next state would be $$S^{i} = S^{i-1}\setminus \{s^{i-1}_{\ell_i}\}\cup \{r_i\}.$$
The social cost is defined as:
$$f(\overline{\sigma},\overline{s}) = \sum_{i=1}^{|\overline{\sigma}|}  d(s^{i-1}_{\ell_i},r_i).$$

On a high level, we want to find out if any algorithm can be simulated by appropriate pricing,
i.e., whether given an algorithm $\A$ and its current state (locations of servers),
we can give a pricing of servers that will make an arriving greedy agent with a request $r$,
serve $r$ in the same way that $\A$ would have, wherever the request occurs.

For some algorithms, designing a payment scheme is straightforward.  For instance,
Irani and Rubinfeld proved that the following $2$-server algorithm \emph{Balance2}
is $10$-competitive on any metric space~\cite{IraniR91}.
The algorithm keeps track of the total distance that each of the two servers has traveled,
and serves each request with the server that minimizes the sum of its distance to the request
and \emph{half} the total distance it has traveled before.
Clearly, setting the price of each server to the second summand, i.e.,
half the total distance the server has moved already, will make the agents
follow the algorithm.  The only catch is that we have no control over tie-breaking
since it is done by the agents.  However, this can only affect the total cost
by an additive constant.

In general, however, simulating any algorithm $\A$ is not as easy.
For instance, $\A$ could move many servers (in a non-lazy fashion) upon a request.
This is the case with some known algorithms, for example with the
\emph{Double Coverage} ({\DC}) algorithm~\cite{ChrobakKPV91}
that is $k$-competitive for the line metric.
This algorithm has been generalized to all tree metrics,
retaining the optimal ratio~\cite{ChrobakL91}.
As an agent is only allowed to pick a server that will serve the request,
such complex algorithms cannot be simulated directly.

In this work, we restrict our attention to $k$-server in tree metrics,
and the optimal algorithm for this setting~\cite{ChrobakL91} in particular.
We observe that the standard transformation to make the algorithm {\em \laz},
if done carefully, preserves its other properties, which we call {\em locality}
and {\em monotonicity} (all defined later), and that combined these
enable simulation by a pricing scheme.
We remark that, in general, such transformation employs keeping track
of ``virtual positions'' of servers and basing decisions on those.
This makes designing a~pricing scheme to simulate the algorithm challenging,
since an agent's decision is based on the ``real positions'' of the servers
(and their prices).

\subsection{Notation}
Given an algorithm $\A$, let $\St{\A} = \{\srv_1,\ldots,\srv_k\}$ denote its \emph{state}, i.e.,
the set of points where $\A$'s servers are currently located.  We will associate the servers
with the points they are in.  Similarly, we will associate a request $r$
with the point in which it occurs.  Given two points $v$ and $w$ in a tree,
we let $\cal{P}(v,w)$ denote the unique path between $v$ and $w$.
We say that a server $\srv\in\St{\A}$ is \emph{adjacent} for a point $v$
if $\cal{P}(v,\srv)\cap\St{\A}=\{\srv\}$, i.e., there is no other server from $\St{\A}$
on the path $\cal{P}(v,\srv)$.

A deterministic online algorithm is \emph{\laz} if the only moves it makes consist in
moving a single server to the current request.  An algorithm is \emph{\loc} if
it always serves a request with a server that is adjacent to it.
Furthermore, an algorithm $\A$ is \emph{\m} if it satisfies the following property:
given $\St{\A}$ and a request at $r$, if the algorithm would serve a request at $r$ with
$\srv\in\St{\A}$, then it would also serve a request at any vertex on the path $\cal{P}(r,\srv)$
with $\srv$.

\subsection{Matchings}
Later, we give a transformation that takes an algorithm $\A$ and returns an
``equivalent'' algorithm $\A'$ that is {\laz}.  This transformation relies
on minimum cost (perfect, bipartite) matching of $\St{\A}$ to $\St{\A'}$,
where the costs are specified by the metric distance $d$.
Given two sets of points $X$ and $Y$ of equal cardinality, we denote any
such matching by $M(X,Y)$ and its cost by $d(X,Y)$.

When the metric space is a line, we associate points with real numbers,
use inequalities to specify the order of points, and note that
$d(x,y)=|x-y|$.
Moreover, we remark that in such case there exists a perfect matching
of particular structure.  Namely, given $X=\{x_1,x_2,\ldots,x_k\}$
and $Y=\{y_1,y_2,\ldots,y_k\}$ that are sorted from left to right,
i.e., $x_1 \leq x_2  \leq \ldots \leq x_k$ and $y_1 \leq y_2  \leq \ldots \leq y_k$,
we call a matching in which $x_i$ is matched to $y_i$ for all $i$
a \emph{canonical matching} and denote it $\M(X,Y)$.

\begin{lemma}\label{lem:line-match}
The canonical matching on a line is a minimum cost perfect matching.
\end{lemma}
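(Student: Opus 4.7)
The plan is a standard exchange argument. Given any perfect matching $M$ between $X$ and $Y$, view it as a permutation $\sigma$ such that $x_i$ is matched to $y_{\sigma(i)}$. If $\sigma$ is the identity, $M = \mathcal{M}(X,Y)$ and we are done. Otherwise, there exists an inversion, i.e., indices $i < j$ with $\sigma(i) > \sigma(j)$. I will show that ``uncrossing'' this inversion, by rematching $x_i$ to $y_{\sigma(j)}$ and $x_j$ to $y_{\sigma(i)}$, does not increase the total cost. Iterating this operation a finite number of times (at most the number of inversions) transforms $M$ into $\mathcal{M}(X,Y)$ without ever increasing the cost, which gives the result.

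The only substantive step is the following four-point inequality on the line: for any reals $a \leq b$ and $c \leq d$,
$$|a - c| + |b - d| \leq |a - d| + |b - c|.$$
Applied with $a = x_i$, $b = x_j$ (so $a \leq b$ by the sorting of $X$) and $c = y_{\sigma(j)}$, $d = y_{\sigma(i)}$ (so $c \leq d$ since $\sigma(j) < \sigma(i)$ and $Y$ is sorted), this inequality is exactly the statement that uncrossing does not increase cost. I would prove it by a short case analysis on the relative order of the four points on the line, or equivalently by observing that it is the Monge/quadrangle inequality for the metric $|\cdot|$ on $\mathbb{R}$: since $[a,c] \cup [b,d] \subseteq [a,d] \cup [b,c]$ when both intervals are interpreted appropriately, the sum of lengths on the left is bounded by the sum on the right.

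I do not expect a real obstacle here; the entire argument is elementary. The only thing to be careful about is that the exchange argument terminates, which follows because each uncrossing strictly decreases the number of inversions in $\sigma$ while weakly decreasing the cost.
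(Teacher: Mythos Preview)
Your proposal is correct and takes essentially the same approach as the paper: both are exchange/uncrossing arguments that transform an arbitrary (or minimum-cost) matching into the canonical one without increasing cost. The paper picks the smallest mismatched index $i$ and swaps so that $x_i$ becomes matched to $y_i$ (using the triangle inequality for the cost bound), whereas you pick an arbitrary inversion and invoke the Monge four-point inequality; these are cosmetic variants of the same idea, with termination guaranteed in both cases (by the minimum correct prefix growing, respectively by the inversion count strictly decreasing).
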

\begin{proof}
Let $M=M(X,Y)$ be any minimum cost perfect matching of $X$ to $Y$.
We show how $M$ can be transformed to the canonical matching without increasing its cost.
As long as $M$ is not canonical, find the minimum $i$ such that $x_i$ is not matched to $y_i$.
Then $x_i$ is matched to some $y_j$ and $y_i$ is matched to some $x_h$.
Moreover, $j>i$ and $h>i$ since $i$ is minimal.
We assume wlog that $x_i \leq y_i$  (if not, swap the sets).
This implies that $y_j \geq y_i \geq x_i$.
Modify $M$ as follows: match $x_i$ with $y_i$ and $x_h$ with $y_j$.
Matching $x_i$ to $y_i$ decreases the total distance by $y_j-y_i$.
By the triangle inequality, $|x_h-y_j| \leq |x_h-y_i| + |y_i-y_j| =  |x_h-y_i| + y_j-y_i$,
i.e., matching  $x_h$ to $y_j$ increases the total distance by at most $y_j-y_i$.
Hence, the cost does not increase throughout the transformation,
and is thus minimized by the canonical matching $\M(X,Y)$ we obtain at the end.
\qed\end{proof}

Given $X$ and $Y$ and a point $r\in X$, we call a matching of $X$ to $Y$ \emph{$r$-\loc}
if $r$ is matched to a point in $Y$ that is adjacent to it.
\begin{lemma}\label{lem:loc-match}
For any two sets of points $X$ and $Y$ such that $|X|=|Y|$ in a tree and a given point $r \in X$,
there exists a minimum cost perfect matching of $X$ to $Y$ that is $r$-{\loc}.
\end{lemma}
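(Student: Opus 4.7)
The plan is to start with any minimum cost perfect matching $M$ of $X$ to $Y$, and perform a single ``swap'' that makes $r$'s match adjacent to $r$ without increasing the total cost.

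First I would let $y = M(r)$ and consider the unique path $\mathcal{P}(r,y)$ in the tree. If $y$ is already adjacent to $r$ with respect to $Y$, i.e., $\mathcal{P}(r,y)\cap Y = \{y\}$, we are done. Otherwise, I would let $y'$ be the point of $Y$ lying on $\mathcal{P}(r,y)$ that is closest to $r$ (with $y'\neq y$ possible). By construction $\mathcal{P}(r,y')$ is a prefix of $\mathcal{P}(r,y)$, so $\mathcal{P}(r,y')\cap Y = \{y'\}$ and $y'$ is adjacent to $r$. Let $x' = M^{-1}(y')$, and define a new matching $M'$ that agrees with $M$ everywhere except that $M'(r) = y'$ and $M'(x') = y$.

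The key step is to verify that $M'$ has cost no greater than $M$. Since $y'$ lies on $\mathcal{P}(r,y)$, we have the exact identity $d(r,y) = d(r,y') + d(y',y)$. By the triangle inequality, $d(x',y) \leq d(x',y') + d(y',y)$. Subtracting the old edges from the new,
\[
\bigl(d(r,y')+d(x',y)\bigr) - \bigl(d(r,y)+d(x',y')\bigr) \leq d(y',y) - d(y',y) = 0,
\]
so the cost does not increase. Since $M$ was already of minimum cost, so is $M'$, and $M'$ is $r$-local by the choice of $y'$.

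The only subtle point, and thus the main thing to watch, is selecting $y'$ to be the $Y$-point on $\mathcal{P}(r,y)$ \emph{closest to $r$}; this guarantees that after the single swap, $r$'s new partner is itself adjacent to $r$, so we do not have to iterate or argue about termination. Everything else reduces to uniqueness of paths in a tree and the triangle inequality, essentially mirroring the proof of Lemma~\ref{lem:line-match} but specialized to the pair involving $r$.
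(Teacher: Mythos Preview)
Your proof is correct and follows essentially the same approach as the paper: take a minimum cost matching, pick the $Y$-point on $\mathcal{P}(r,y)$ closest to $r$, and perform a single swap, using the path identity $d(r,y)=d(r,y')+d(y',y)$ together with the triangle inequality to show the cost does not increase. The only cosmetic wrinkle is the parenthetical ``with $y'\neq y$ possible'': in the ``otherwise'' branch $y'\neq y$ is in fact forced, so you may want to reword that, but it does not affect the argument.
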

\begin{proof}
Let $M=M(X,Y)$ be any minimum cost perfect matching of $X$ to $Y$.
If $M$ is not $r$-{\loc}, do the following.  Let $y$ be the point in $Y$ that $r$ is matched to.
Then there are other points from $Y$ on the path $\cal{P}(y,r)$; let $y'$ be the one that is closest to $r$
(the last one on the path to it).  It follows that $y'$ is local for $r$.
Let $x \in X$ be the point that $y'$ is matched to.  Turn $M$ into another perfect matching $M'$
by matching $r$ to $y'$ and $x$ to $y$.  Since $y' \in \cal{P}(y,r)$,
we have $d(y',r) = d(y,r) - d(y',y)$.  Moreover, by the triangle inequality
$d(x,y) \leq d(x,y') + d(y',y)$.  Thus, the cost of $M'$ is no larger than that of $M$,
and hence $M'$ is an $r$-{\loc} minimum cost perfect matching of $X$ to $Y$.
\qed\end{proof}

We denote a (fixed) $r$-{\loc} minimum cost matching of $X$ to $Y$ by $M_r(X,Y)$.
In particular, for the line metric, we call the $r$-{\loc} matching that is
the result of the procedure described in the proof of Lemma~\ref{lem:loc-match}
applied to $\M(X,Y)$ an \emph{$r$-canonical matching}, and denote it by $\M_r(X,Y)$.

\subsection{Algorithm transformation}
Algorithm~\ref{alg:lazy-transform} describes how to transform any algorithm $\A$
into an equivalent one that is {\laz} and never incurs larger cost than $\A$.
Later, we observe that (wlog) the output algorithm is also {\loc}.
Finally, we note that for {\DC} the transformation also preserves monotonicity.

\begin{alg}[tb]
\textbf{Input:} An algorithm $\A$ and an online sequence of requests $r_1,\ldots,r_i$.\\
\textbf{Output:} A {\laz} algorithm $\A'$.

\begin{enumerate}
  \item Let $\St{\A}$ be the state of $\A$ after serving $r_i$ and $\St{\A'}$ the state of $\A'$ right before
  	serving $r_i$; note that $r_i \in \St{\A}$.
  \item Find $M=M(\St{\A},\St{\A'})$ and serve $r_i$ with the server matched to it in $M$.
\end{enumerate}
\caption{The algorithm transformation.}\label{alg:lazy-transform}
\end{alg}

In general, the transformation only guarantees that its output is a {\laz} algorithm of no larger cost than $\A$.
The additional properties rely on particular structure of the input algorithm and the matching that is used upon requests.
\begin{lemma}\label{lem:transformation}
Given an algorithm $\A$, the transformation produces a {\laz} algorithm $\A'$ such that
for every sequence of requests $R=r_1,\ldots,r_i$, it holds that $C\lp A', R\rp\leq C\lp A, R\rp$.
\end{lemma}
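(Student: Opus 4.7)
My plan is to prove the lemma by a standard potential/amortized argument, using the cost of a minimum-cost matching between the two states as the potential.

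First, laziness of $\A'$ is immediate from the construction: on every request $r_i$, $\A'$ only moves the single server $s \in \St{\A'}^{(i-1)}$ that is matched to $r_i \in \St{\A}^{(i)}$ under $M$, and it moves it directly to $r_i$. So I just need to bound the total cost.

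Let $\St{\A}^{(i)}$ and $\St{\A'}^{(i)}$ denote the states of $\A$ and $\A'$ after serving $r_i$ (with $\St{\A}^{(0)}=\St{\A'}^{(0)}$, the common initial configuration). Define the potential
$$\Phi_i \;=\; d\bigl(\St{\A}^{(i)},\St{\A'}^{(i)}\bigr),$$
i.e.\ the cost of a minimum cost perfect matching between the two states. Then $\Phi_0 = 0$ and $\Phi_i \geq 0$ for all $i$, so it suffices to show the per-step amortized inequality
$$C_i(\A') \;+\; \Phi_i \;-\; \Phi_{i-1} \;\leq\; C_i(\A).$$
Summing this over $i=1,\ldots,n$ and using $\Phi_0=0$, $\Phi_n\geq 0$ gives $C(\A',R)\leq C(\A,R)$.

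To establish the per-step inequality I would chain together two bounds. Let $s$ be the server in $\St{\A'}^{(i-1)}$ matched to $r_i$ under $M = M(\St{\A}^{(i)},\St{\A'}^{(i-1)})$, so $C_i(\A') = d(r_i,s)$ and $\St{\A'}^{(i)} = \St{\A'}^{(i-1)} \setminus \{s\} \cup \{r_i\}$. First, by modifying $M$ so that $r_i \in \St{\A}^{(i)}$ is matched to $r_i \in \St{\A'}^{(i)}$ (cost $0$) while keeping the remaining pairs, I get a valid perfect matching of $\St{\A}^{(i)}$ to $\St{\A'}^{(i)}$ of cost $d(\St{\A}^{(i)},\St{\A'}^{(i-1)}) - d(r_i,s)$, hence
$$\Phi_i \;\leq\; d\bigl(\St{\A}^{(i)},\St{\A'}^{(i-1)}\bigr) \;-\; d(r_i,s).$$
Second, $\A$ transforms $\St{\A}^{(i-1)}$ into $\St{\A}^{(i)}$ by moving its servers a total distance of $C_i(\A)$; using the natural server-identity bijection between these two sets together with the triangle inequality applied coordinate-wise to any min-cost matching of $\St{\A}^{(i-1)}$ to $\St{\A'}^{(i-1)}$, I obtain a matching of $\St{\A}^{(i)}$ to $\St{\A'}^{(i-1)}$ of cost at most $C_i(\A) + \Phi_{i-1}$, hence
$$d\bigl(\St{\A}^{(i)},\St{\A'}^{(i-1)}\bigr) \;\leq\; C_i(\A) \;+\; \Phi_{i-1}.$$
Adding $d(r_i,s) = C_i(\A')$ to the first bound and substituting the second yields exactly $C_i(\A') + \Phi_i - \Phi_{i-1} \leq C_i(\A)$, which completes the argument.

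The only subtle step is the second inequality above, where care is needed because the matching one uses is induced by $\A$'s server movements composed with an optimal matching between the old states; everything else is mechanical. I expect no real obstacle, as both inequalities follow directly from the triangle inequality applied to individual pairs in a perfect matching.
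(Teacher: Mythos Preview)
Your proposal is correct and follows essentially the same approach as the paper: both use the potential $\Phi = d(\St{\A},\St{\A'})$ and establish the per-request amortized inequality by separately bounding the effect of $\A$'s moves (your second inequality, via the triangle inequality on a fixed matching) and then $\A'$'s single move (your first inequality, by deleting the matched edge $(r_i,s)$). The only cosmetic difference is that you name the intermediate quantity $d\bigl(\St{\A}^{(i)},\St{\A'}^{(i-1)}\bigr)$ explicitly, whereas the paper phrases the same two steps as ``first $\A$ moves, then $\A'$ moves.''
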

\begin{proof}
It follows directly from the transformation that $\A'$ is {\laz}.  In order to bound its cost,
denote the costs of $\A$ and $\A'$ upon serving request $r_i$ by $C_i(\A)$ and $C_i(\A')$ respectively. 
We will prove that
\begin{equation}\label{eq: trans-cost}
C_i(\A') + \Delta\Phi \leq C_i(\A) \enspace,
\end{equation} 
where $\Phi = d(\St{\A},\St{\A'})$ is a non-negative potential function.

To prove~\eqref{eq: trans-cost}, it suffices to consider the moves of $\A$ and $\A'$
independently, in this order.  Fix $M = M(\St{\A},\St{\A'})$.

We keep $M$ fixed as $\A$ moves its servers.
Clearly, when $\A$ moves a server $\srv$ by distance $d$, the cost of $M$ does not
increase by more than $d$.  Hence, the same holds for the perfect matching of minimum cost.
Thus $\Phi$ increases by at most $d$, and \eqref{eq: trans-cost} holds.

Once $\A$ is done with its moves, we analyze the move of $\A'$.  Note that at this point
$r_i \in \St{\A}$, i.e., $\A$ has one of its servers at $r_i$.  Let $M=M(\St{\A},\St{\A'})$
be a minimum cost matching of the algorithms' servers and $\srv \in \St{\A'}$ the server of $\A'$
that is matched to $r_i$.  Upon the move of $\srv$ to $r_i$, $\Phi$ decreases by at least $d(\srv,r_i)$
since this is the case if we keep $M$ fixed.  (It is actually easy to see that there is no cheaper matching).
Thus, $C_i(\A') + \Delta\Phi \leq 0$ holds.
\qed\end{proof}

\begin{remark}\label{rem:trans-loc}
The transformation produces a {\laz} {\loc} algorithm if, upon each request $r_i$, an $r_i$-{\loc} matching is used.
\end{remark}

Note that we have proved that the algorithm that our transformation produces is {\loc}
but not that it is {\m}.
Nevertheless, for the case of line and the \emph{Double Coverage} ({\DC})
algorithm~\cite{ChrobakKPV91,ChrobakL91}, we can prove monotonicity.

\subsection{Transforming {\DC} on a line}

The \emph{Double Coverage} (\DC) algorithm has been originally designed
for the line metric~\cite{ChrobakKPV91}, and later generalized
to all tree metrics~\cite{ChrobakL91}.  For these metrics, it attains
the optimal competitive ratio of $k$.  For completeness, we describe {\DC}
for the case of the line metric in Algorithm~\ref{alg:dc}.

\begin{alg}[tb]
\textbf{Input:} A request $r$ on the line.

\begin{enumerate}
  \item Let $\srv_L$ and $\srv_R$ be the servers of {\DC} immediately to the left of $r$
  	and right of $r$ respectively.  (Note that one of them may not exist; in such case ignore it.)
  \item Move both $\srv_L$ and $\srv_R$ towards $r$ by $\min\{d(\srv_L,r),d(\srv_R,r)\}$.
  \item Serve $r$ with the server that reached it. (Note that both $\srv_L$ and $\srv_R$ could have.)
\end{enumerate}
\caption{Double Coverage on a line.}\label{alg:dc}
\end{alg}

To prove that {\DC} on a line can be transformed to a {\laz} {\loc} {\m} algorithm,
we will rely on the fact that {\DC} itself is {\loc} and {\m}, as well as on properties
of canonical matchings on a line.

\begin{lemma}\label{lm:wfdc-is-monotone}
Applied to {\DC} on a line, the transformation produces a {\laz} {\loc} {\m} algorithm
if, upon each request $r_i$, an $r_i$--canonical matching is used.
\end{lemma}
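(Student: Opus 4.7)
The plan is to verify the three properties of the algorithm $\A'$ produced by the transformation applied to \DC. Laziness is immediate from Algorithm~\ref{alg:lazy-transform}, since each request is served by moving exactly the matched server, and locality follows from Remark~\ref{rem:trans-loc} because the $r_i$-canonical matching is by construction $r_i$-\loc{} (Lemma~\ref{lem:loc-match}). The substance of the proof is monotonicity, and my plan is to reduce it to an explicit formula for which $\A'$-server is chosen, based on the sorted indices of the two states.

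I would order both states from left to right as $\srv_1^{\A}<\cdots<\srv_k^{\A}$ and $\srv_1^{\A'}<\cdots<\srv_k^{\A'}$; by Lemma~\ref{lem:line-match} the canonical matching $\M(\St{\A},\St{\A'})$ pairs $\srv_i^{\A}$ with $\srv_i^{\A'}$. Given a request $r$, let $p$ be the number of \DC-servers at most $r$, so that $\srv_p^{\A}$ and $\srv_{p+1}^{\A}$ are \DC's adjacent servers (with the obvious boundary conventions). Let $j$ be the sorted index of $r$ in $\St{\A}$ after \DC{} serves $r$: $j=p$ if the left neighbor reaches $r$ first, equivalent to $r\leq(\srv_p^{\A}+\srv_{p+1}^{\A})/2$, and $j=p+1$ otherwise. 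Let $m=|\{i : \srv_i^{\A'}<r\}|$. A one-swap analysis of the $r$-canonical matching, starting from $\M(\St{\A},\St{\A'})$ and applying the procedure from the proof of Lemma~\ref{lem:loc-match}, yields the clean statement that $\A'$ serves $r$ with the left-adjacent $\srv_m^{\A'}$ precisely when $j\leq m$, and with the right-adjacent $\srv_{m+1}^{\A'}$ when $j\geq m+1$.

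With this characterization in hand, monotonicity becomes a short case analysis. By symmetry, I assume $\A'$ uses $\srv_m^{\A'}$ for $r$, so $j\leq m$ and $\srv_m^{\A'}<r$, and I consider $r'\in[\srv_m^{\A'}, r]$. The case $r'=\srv_m^{\A'}$ is immediate by laziness, and for $r'\in(\srv_m^{\A'}, r]$ the locality of $\A'$ at $r$ forces $m'=m$, so it suffices to show $j'\leq m$. If \DC{} used the right-adjacent server at $r$, then $j=p+1\leq m$ gives $p'\leq p\leq m-1$ and hence $j'\leq p'+1\leq m$. If \DC{} used the left-adjacent server at $r$, then $p\leq m$ and $r\leq(\srv_p^{\A}+\srv_{p+1}^{\A})/2$; since $r'\leq r$ the same midpoint inequality holds for $r'$, so \DC{} either uses the same left-adjacent server or one further to the left, and thus $j'\leq p\leq m$.

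I expect the main obstacle to be precisely this last sub-case: preventing $j'$ from jumping to $m+1$ as the request shrinks from $r$ to $r'$. The argument relies on the order-preserving nature of canonical matchings together with the fact that \DC's ``closer server'' rule on a line is itself monotone, in the sense that a smaller $r'$ can only push the choice further left. The symmetric case, where $\A'$ uses $\srv_{m+1}^{\A'}$ at $r$, is handled by a mirror argument.
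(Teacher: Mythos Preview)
Your proposal is correct and follows essentially the same route as the paper. Both arguments hinge on the same characterization: if $r$ sits at sorted index $j$ in $\St{\DC}$ after \DC{} serves it, and $\srv_m^{\A'}<r<\srv_{m+1}^{\A'}$, then the $r$-canonical matching assigns $r$ to $\srv_m^{\A'}$ when $j\le m$ and to $\srv_{m+1}^{\A'}$ otherwise. The only difference is packaging: the paper dispatches monotonicity in one line by noting that \DC{} is itself \loc{} and \m{}, hence $j$ is nondecreasing in $r$, whereas you unfold that observation into the explicit two-subcase analysis on whether \DC{} used its left or right adjacent server. Two cosmetic points: the case $r'=\srv_m^{\A'}$ is handled by \emph{locality} (the only adjacent server is $\srv_m^{\A'}$ itself), not laziness; and $m'=m$ for $r'\in(\srv_m^{\A'},r]$ follows directly from the definition of $m$, not from locality of $\A'$ at $r$.
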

\begin{proof}
We focus on monotonicity since the other properties follows from Lemma~\ref{lem:transformation} 
and Remark~\ref{rem:trans-loc}.
To this end, let $\DC'$ be the output of the transformation and $\St{\DC'}=\{\srv_1,
\ldots,\srv_k\}$, where $\srv_1 \leq \srv_2 \leq \ldots \leq \srv_k$, its current state.  Since for each request $r$, we use the
$r$-{\loc} canonical matching in the transformation, $r$ will be served by a server of $\DC'$ that is local for it.
In particular, $\DC'$ will serve a request $r \leq \srv_1$ with $\srv_1$ and a request $r \geq \srv_k$ with $\srv_k$.
Similarly, $\DC'$ will serve a request at position of one its servers $\srv$ with $\srv$.

Now consider $\srv_i < r < \srv_{i+1}$, i.e., a request between two adjacent servers of $\DC'$.
Suppose that $\DC$ serves $r$ with its $j$-th leftmost server.
Recall that an $r$-{\loc} canonical matching between the state of $\DC$ after serving $r$ to $\St{\DC'}$
would be used in the transformation for a request (at) $r$.
In the canonical matching $r$ would be matched to $\srv_j$ but as this matching is made {\loc},
$r$ is matched to the last server of $\DC'$ on the path from $\srv_j$ to $r$.
In other words, $r$ is matched to $\srv_i$ if $j \leq i$ and to $\srv_{i+1}$ otherwise,
and it is served by the server it is matched to.
Since $\DC$ itself is {\loc} {\m}, $j$ is a non-decreasing function of $r$, which implies that $\DC'$ is {\m}.
\qed\end{proof}

\subsection{Payment scheme}

We define a payment scheme as $P\colon \St{\A} \mapsto \Reals$.
The disutility of an agent while serving a request $r$ using server $\srv$ is defined as $d(r,\srv) + P(\srv)$.
We assume that at each iteration the request $r$ is served using the server $\srv$ that minimizes
$d(r,\srv) + P(\srv)$, where $r$ is the request the incoming agent wishes to serve.

\begin{theorem}\label{thm:pricing-k-server-trees}
For every {\laz} {\loc} {\m} deterministic $k$-server algorithm $\A$ on a tree,
there exists a payment scheme $P$ that makes incoming agents \emph{weakly} follow $\A$
and incur movement cost no larger than that of $\A$.
\end{theorem}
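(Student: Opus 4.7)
The plan is to precompute, at each request, the partition of the tree that $\A$ induces and then choose prices so the weighted-Voronoi partition under those prices coincides with it. For the current configuration $\St{\A} = \{\srv_1,\ldots,\srv_k\}$ and each $j$, let $R_j$ be the set of points $r \in T$ such that $\A$ would serve a request at $r$ with $\srv_j$. Locality of $\A$ forces $\srv_j$ to be adjacent to every $r \in R_j$, while monotonicity forces the entire path $\cal{P}(r,\srv_j)$ to lie in $R_j$; together, these imply each $R_j$ is a connected subtree of $T$ containing $\srv_j$, and the $R_j$'s partition $T$ up to boundary ties.

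Since the $R_j$'s are disjoint connected subtrees covering $T$, contracting each to a single node yields a tree $\hat{T}$ (any cycle in $\hat{T}$ would lift to a cycle in $T$). Consequently, any two adjacent regions $R_i, R_j$ meet at a single boundary point $b_{ij}$ on the path $\cal{P}(\srv_i,\srv_j)$. I would root $\hat{T}$ at an arbitrary server and assign prices recursively: starting with $P(\srv_{\mathrm{root}})=0$, for each child $\srv_j$ of parent $\srv_i$ in $\hat{T}$ set
\[
P(\srv_j) = P(\srv_i) + d(b_{ij},\srv_i) - d(b_{ij},\srv_j),
\]
which enforces $d(b_{ij},\srv_i)+P(\srv_i) = d(b_{ij},\srv_j)+P(\srv_j)$ at every adjacency. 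Since $\hat{T}$ is a tree, this recursion has no consistency obstructions.

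To verify that a greedy agent at any point $r\in R_j$ prefers $\srv_j$, consider $\phi: T\to\Reals$ defined by $\phi(x)=d(x,\srv_j)+P(\srv_j)$ for $x\in R_j$. The boundary conditions make $\phi$ well-defined and continuous, and inside each region it is $1$-Lipschitz in the tree metric since $d(\cdot,\srv_j)$ is; hence $\phi$ is globally $1$-Lipschitz. For any server $\srv_i$ and any $x\in T$, $\phi(x) \leq \phi(\srv_i)+d(x,\srv_i) = P(\srv_i)+d(x,\srv_i)$, which specialised to $x\in R_j$ yields $d(x,\srv_j)+P(\srv_j) \leq d(x,\srv_i)+P(\srv_i)$ for every $i$. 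Thus $\srv_j$ minimises the greedy objective, so the agent weakly follows $\A$.

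For the cost bound, since $\A$ is lazy the per-request cost is just $d(r,s)$ for whichever server $s$ is picked; an infinitesimal tie-breaking perturbation of $P$ in favour of $\A$'s chosen server (which does not affect the Lipschitz argument) makes the agent's choice identical to $\A$'s, giving exact per-step cost parity. The main obstacle I anticipate is the structural step: verifying that in the weak monotone regime, the $R_j$'s really do partition $T$ into connected subtrees with a single boundary point per adjacent pair, so that $\hat{T}$ is a tree and the recursive pricing is consistent; once those structural facts are in hand, the Lipschitz argument and the tie-breaking perturbation close out the proof.
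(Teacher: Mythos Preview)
Your proposal is correct and follows essentially the same approach as the paper: partition the tree into the server-regions $R_j$, observe that the region adjacency graph is itself a tree, fix one price arbitrarily, and propagate the remaining prices through the boundary (threshold) equations $P(\srv_i)+d(b_{ij},\srv_i)=P(\srv_j)+d(b_{ij},\srv_j)$. The one noteworthy difference is in the verification step: the paper argues directly by summing the threshold inequalities along the sequence of regions on the path from an arbitrary server $\srv'$ to the request, whereas you package the same computation as a global $1$-Lipschitz potential $\phi(x)=d(x,\srv_j)+P(\srv_j)$ and apply $\phi(x)\le\phi(\srv_i)+d(x,\srv_i)$. This is a clean equivalent formulation; the paper's path-summation is the ``unrolled'' version of your Lipschitz bound. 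On tie-breaking, the paper does not perturb $P$ but instead (after the proof) perturbs the threshold points and argues the resulting extra cost is negligible; your ``perturb $P$'' suggestion is in the same spirit, though note that perturbing $P$ directly would break the exact continuity of $\phi$ at the boundaries, so the paper's threshold-shift is the safer implementation of that idea.
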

See the remarks after the proof for explanation what ``weakly'' means.
\begin{proof}
We obtain the desired pricing $P$ by fixing the price of an arbitrary server in $\St{\A}$
and setting the remaining prices so that they satisfy a system of linear equations.
The equations follow from the following consideration.

Let $\St{\A}$ be the state of the $\A$.
Since $\A$ is {\laz} {\loc} {\m}, for every server $\srv \in \St{\A}$,
there is subtree containing $\srv$ consisting of all the points
that would be served by $\srv$ if a request were made there.
Clearly, these subtrees form a partition of the underlying tree.
From now on we refer to these subtrees as regions that particular servers
are responsible for to avoid confusion. Since the underlying space is a tree,
the adjacency relation on the regions defines a subtree spanning the regions.

Let $\srv$ and $\srv'$ be a pair of servers whose regions are adjacent.
Then, there is a \emph{threshold point} $v(\srv,\srv')$
(which may be one of $\srv$, $\srv'$) on the path $\cal{P}(\srv,\srv')$ such that,
ignoring $v(\srv,\srv')$ itself, the algorithm would serve all requests
on $\cal{P}(\srv,v(\srv,\srv'))$ with $\srv$ and all requests on
$\cal{P}(\srv',v(\srv,\srv'))$ with $\srv'$; a request at $v(\srv,\srv')$
would be served with either of $\srv$, $\srv'$.
Having identified the threshold vertex $v(\srv,\srv')$, we introduce the following equation:
\begin{equation}\label{eq:price-system}
 P(\srv) + d(\srv,v(\srv,\srv')) = P(\srv') +d(\srv',v(\srv,\srv')) \enspace.
\end{equation}

We claim that the system of equation(s)~\eqref{eq:price-system}
taken for all pairs of adjacent servers has a unique solution
once the price for an arbitrary server is fixed.
This follows, since we introduced one equation per edge of the tree of regions;
in particular, acyclicity guarantees that the equations are independent.
In fact, the pricing that is the unique solution to our system can
be found by a BFS traversal of $G$, starting at the vertex with fixed
price, and using~\eqref{eq:price-system} at each newly visited node
to determine its price.

To see that the pricing $P$ makes the agents follow the algorithm,
consider a request at point $v$ that $\A$ would serve with $\srv$.
Then for any $\srv' \in \St{\A}$,  consider the path from $\srv'$ to $v$,
and let $\srv'=\srv_0,\srv_1,\ldots,\srv_i=S$ be the sequence of servers
responsible for successive regions on it.   By~\eqref{eq:price-system},
for each $j<i$, we have
$P(\srv_j) = P(\srv_{j+1}) + d(\srv_{j+1},v(\srv_j,\srv_{j+1})) - d(\srv_{j},v(\srv_j,\srv_{j+1}))
\geq P(\srv_{j+1}) - d(\srv_{j+1},\srv_j)$.
Summing the inequalities over a prefix of the path, we get that for all $j\leq i$
\begin{equation}\label{eq:path-ineq}
P(\srv')=P(\srv_0) \geq P(\srv_{j}) - d(\srv_0,\srv_j) \enspace.
\end{equation}
Therefore, if the server $\srv=\srv_i$ itself lies on the path, \eqref{eq:path-ineq} yields that
$P(\srv') + d(\srv',v) = P(\srv') + d(\srv',\srv) + d(\srv,v) \geq P(\srv) + d(\srv,v)$,
i.e., that $\srv$ minimizes the disutility among servers responsible for the regions
along the path.  If $\srv$ does not lie on this path, i.e., if $v$ lies between
$\srv'$ and $\srv$, we note that by~\eqref{eq:path-ineq} for $j=i-1$,
$P(\srv') + d(\srv',v) = P(\srv') + d(\srv',\srv_{i-1}) + d(\srv_{i-1},v) \geq P(\srv_{i-1}) + d(\srv_{i-1},v)$.
But as the region $\srv_{i-1}$ is responsible for is adjacent to the one $\srv$ is responsible for,
\eqref{eq:price-system} guarantees that
$P(\srv) + d(\srv,v) \leq P(\srv) + d(\srv,v(\srv,\srv_{i-1})) = P(\srv_{i-1}) +d(\srv_{i-1},v(\srv,\srv_{i-1})) \leq P(\srv_{i-1}) + d(\srv_{i-1},v)$.
Thus $\srv$ minimizes the disutility also in this case, and thus overall.
\qed\end{proof}

As noted before, we stress that tie-breaking is up to the agents,
and therefore we cannot guarantee that the agents will follow an algorithm exactly.
This is what we mean by saying in Theorem~\ref{thm:pricing-k-server-trees} that the
agents ``weakly follow'' such algorithm.
%
%
However, tie-breaking affects neither Lemma~\ref{lem:transformation}
nor Remark~\ref{rem:trans-loc}, i.e., even if agents break ties differently than the algorithm,
in particular, even if they move non-{\loc} servers, we can still price the servers
in such a way that the cost of moving servers by the agents is no larger
than the cost of moving them by the algorithm.

We sketch why this is the case.  If for a given request $r$,
more than one adjacent server minimizes the  disutility of serving $r$,
then $r$ is exactly at the threshold vertex of the pricing scheme's design,
i.e., the boundary of at least two regions.
Then it is easy to see that either server can be matched to $r$
in a minimum cost matching, i.e., a move of either server could result
from the transformation, depending on the matching that is chosen.
To deal with non-adjacent servers, it is easiest to assume that for
any pair of adjacent servers $s_i$ and $s_{i+1}$,
the threshold vertex $v(s_i,s_{i+1})$ is always strictly between them.
This can be done by moving the vertex by arbitrarily small $\epsilon>0$
away from the server $s$ it coincides with.
Clearly, this may result in violating~\eqref{eq: trans-cost}
only by $2\epsilon$, and we can make $\epsilon$ decrease exponentially
for successive requests, so that the total additional additive cost
is at most $c\cdot\epsilon$, where $c$ can also be made arbitrarily small.

\begin{corollary}
    There exists a payment scheme for the $k$-server problem on a line
    which yields a price of anarchy of $k$.
\end{corollary}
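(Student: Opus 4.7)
The plan is to combine three results already established in the excerpt. First, I would take the Double Coverage algorithm ({\DC}) on the line, which is known to be $k$-competitive \cite{ChrobakKPV91}, and apply the algorithm transformation of Algorithm~\ref{alg:lazy-transform} using, upon each request $r_i$, the $r_i$-canonical matching $\M_{r_i}(\St{\DC},\St{\DC'})$ guaranteed by Lemma~\ref{lem:line-match} and Lemma~\ref{lem:loc-match}. By Lemma~\ref{lm:wfdc-is-monotone} this yields an algorithm $\DC'$ that is simultaneously {\laz}, {\loc}, and {\m}, and by Lemma~\ref{lem:transformation} the cost of $\DC'$ on any request sequence is at most the cost of {\DC}.

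Next, since the line is a tree, I would invoke Theorem~\ref{thm:pricing-k-server-trees} on $\DC'$. This produces a payment scheme $P$ under which greedy agents weakly follow $\DC'$, i.e., modulo tie-breaking. Concatenating the inequalities,
\begin{equation*}
\cost(\text{agents under } P) \leq \cost(\DC') \leq \cost(\DC) \leq k\cdot\cost(\opt),
\end{equation*}
where the first inequality is the guarantee of Theorem~\ref{thm:pricing-k-server-trees}, the second is Lemma~\ref{lem:transformation}, and the last is $k$-competitiveness of {\DC} on the line. This gives price of anarchy $k$, matching the known lower bound for deterministic $k$-server.

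The only subtle step is handling ties: agents may break ties differently from $\DC'$, possibly selecting a non-{\loc} server and thereby routing the transformation through a different (though still minimum cost) matching. As noted in the discussion following Theorem~\ref{thm:pricing-k-server-trees}, this is addressed by perturbing each threshold vertex $v(\srv,\srv')$ by an arbitrarily small $\epsilon>0$ away from the server it coincides with, and letting these perturbations decrease geometrically over successive requests. This breaks the ties in favor of one adjacent server while inflating the total cost only by an arbitrarily small additive constant that does not affect the asymptotic ratio.

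The main obstacle I anticipate is precisely this tie-breaking issue: ensuring that the chain of inequalities $\cost(\text{agents}) \leq \cost(\DC') \leq \cost(\DC)$ remains valid when the agents, rather than the algorithm, select which of several minimizing servers to move. The excerpt's perturbation trick handles this cleanly, but one must verify that a non-{\loc} choice by an agent still corresponds to \emph{some} minimum cost matching of $\St{\DC}$ to $\St{\DC'}$ (so that Lemma~\ref{lem:transformation} still applies), which holds precisely because agents only deviate from $\DC'$ at exact threshold points where the matching choice is non-unique.
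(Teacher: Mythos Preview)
Your proposal is correct and follows exactly the approach the paper intends: the corollary is stated without proof because it is the immediate concatenation of Lemma~\ref{lm:wfdc-is-monotone} (transformed {\DC} is {\laz}, {\loc}, {\m}), Lemma~\ref{lem:transformation} (no cost blowup under the transformation), Theorem~\ref{thm:pricing-k-server-trees} (such algorithms admit a pricing), and the $k$-competitiveness of {\DC}, together with the tie-breaking perturbation argument discussed just before the corollary. You have assembled these pieces correctly.
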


\section{Pricing Parking Slots on a City Block}
\subsection{Problem Definition}

We consider the following strategic parking problem;
the underlying problem is known as the (online) metric matching
but we use the parking terminology.

 The ``town" is modeled as an undirected graph $G=(V,E)$, $\la V\ra=m$. Vertices of $G$ represent empty parking slots, which are the possible destinations of drivers arriving in town.  Every edge $e=(u,v)\in E$ has an associated weight $w_e$, corresponding to the distance between vertices $v$ and $u$.

As for the cars:
     \begin{itemize}
     \item A total of $n$ cars are going to park over time ($n\leq m$). Cars do not depart until every car has parked.
    \item Cars arrive online, one at a time, in order of their index, $1, 2, \ldots, n$.
    \item Cars can only park at unoccupied parking spots.
    \item Cars behave selfishly and seek to minimize their disutility (defined below).
    \end{itemize}

We denote the parking spot taken by car $i$ by $x(i)$, and call the function
$x:[n]\mapsto V$ that maps cars to parking slots {\em an assignment}.
    
A dynamic pricing scheme $P = (P_1, P_2, \ldots, P_n)$, where
\[P_i:\{x(1), x(2), \ldots, x(i-1)\} \times V\mapsto \Reals \enspace,\]
sets a price for an unoccupied parking space $v\in V$ given that there are $i-1$ occupied parking spaces: $x(1), \ldots, x(i-1)$.  The functions $P_i$ may use random bits. Set $P_0(v)=0$ for all $v\in V$ (or any other constant). When we refer to $P_i(v)$,
we assume that $i-1$ cars have already arrived and are parked at $x(1), \ldots, x(i-1)$.  Thus,

    \begin{itemize}
       \item Given $G$ and $i\in 1, \ldots, n$, and given that cars $j\in 1, \ldots, i-1$ are already parked at locations $x(1), \ldots, x(i-1)$, $P_i(v)\in \Reals$ gives a price to park at $v\in V\setminus \{x(1), \ldots, x(i-1)\}$. This is a one-time parking fee, not (say) charged by the hour.
		\item   The current parking fees, $P_i(v)$, $v\in V \setminus \{ x(1), \ldots, x(i-1)\}$ are known to car $i$ {\em before} parking.
   \end{itemize}

     Incentives for cars are determined as follows:
    \begin{itemize}
    \item Let $g(i)$, $1 \leq i \leq n$ the be destination of car $i$, $g$ is called the {\em goal function}.
    \item Given the graph $G$, goal function $g$, and assignment function $x$, the {\em non-monetary cost} to agent $i\in[n]$ with destination $g(i)$ and an assigned parking slot $x(i)$ is $d\left( g\left( i\right),x\left( i\right)\right)$, the weight of a minimum weight path from $x(i)$ to $g(i)$ in $G$.
    \item The disutility of agent $i$ (who chooses to park at unoccupied slot $x(i)$) is $d\left( g\left( i\right),x\left( i\right)\right)+P_{i}(x(i)).$  {\em I.e.}, the disutility is the sum of the non-monetary cost to agent $i$ plus the parking fee to park at $x(i)$.
        \end{itemize}

	Therefore, the agent will choose to park at:
	$$s_i=\argmin_{s\in S_i}\{d\left( g\left( i\right),s\right)+P_{i}(s)\},$$
	where $S_i$ is the set of unoccupied vertices of $G$ at the time agent $i$'s arrival.

	As the goal is cost minimization, the target function $f$ is the social cost,
	$$f(\overline{\sigma},\overline{s}) = \sum_{i=1}^{|\overline{\sigma}|} d\left( g\left( i\right),s\right),$$
	{\em i.e.}, the sum of all the agents' distances from their goals.
	
	Since the pricing schemes presented in this section are randomized, the price of anarchy, defined in Equation (\ref{eq:poa}) should be taken in expectation, and is defined as
	\begin{equation}
	\max_{\overline{\sigma},\, \overline{s}\in \mathrm{eq}(\overline{P},\overline{\sigma})}\frac{\E\big(f(\overline{\sigma},\overline{s})\big)}{  f(\overline{\sigma},\overline{\mathrm{opt}(\sigma)})}, \label{eq:epoa}
	\end{equation}
	where $\mathrm{eq}(\overline{P},\overline{\sigma})$ is the set of equilibria assignments of $\overline{P}$, which depends on the random bits used in $\overline{P}$.

We refer to the pricing scheme $P$ that satisfies $P_i(v)=0$ for all $i\in 1, \ldots, n$, all $x(1), \ldots, x(i-1)$ and all $v\in V\setminus \{x(1), \ldots, x(i-1)\}$ as the \emph{free parking},
and its price of anarchy as the \emph{cost of free parking}.


\subsection{Dynamic Harmonic Pricing for the Weighted Line Graph} \label{sec:lineprice}
In this section we describe the \emph{harmonic pricing scheme} that guaranties an $O\left(\min\left\{\log R, n^2\right\}\right)$ approximation to the minimal cost matching \emph{in expectation}, where $R$ is the aspect ratio of the underlying metric. 
The expectation is over the random bits of the mechanism. The coordination mechanisms sets fees to the vacant parking slots.
The payments are devised without knowing the goals of the drivers. Once set, a car parks so as to minimize its disutility.

The improved approximation guaranties of this mechanism should be compared with the $\Omega(2^n)$ approximation lower bound for the greedy matching on the weighted line graph with free parking \cite{KalyanasundaramP93,KhullerMV94}. On the same input, the \emph{harmonic pricing scheme} incurs a cost of $n$ in expectation.

The pricing is designed to emulate the Harmonic Algorithm of Gupta and Lewi~\cite{GuptaL12}
for a line metric. This simple algorithm works as follows: let $d_l$ be the distance of the arriving car 
from the nearest free parking slot to the left of it, and $d_r$ be its distance from the nearest free 
parking slot to the right. The algorithm assigns the car to the slot on the left with probability $\frac{d_r}{d_l+d_r}$, and to the slot on the right with the remaining probability. 
Gupta and Lewi proved the following.
\begin{theorem}[\cite{GuptaL12}]\label{thm:gupta}\label{harmonic}
Let $d_{\mathrm{max}}$ be the maximal distance between two vertices, and let $d_{\mathrm{min}}$ be the minimal one. The competitive ratio of the harmonic algorithm on a line metric is $O\left( \log{\frac{d_{\mathrm{max}}}{d_{\mathrm{min}}}}\right)$.
\end{theorem}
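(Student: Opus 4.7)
My plan is an amortized (potential-function) argument comparing the expected cost of the harmonic algorithm, $\mathrm{ALG}$, to an arbitrary offline matching $\mathrm{OPT}$ on a per-request basis. At any time $t$, both algorithms have $m-t$ free slots; write $F_t^{\mathrm{ALG}}$ and $F_t^{\mathrm{OPT}}$ for these sets. Their symmetric difference partitions into equal-size subsets $A_t$ (free for ALG, used by OPT) and $B_t$ (free for OPT, used by ALG). I define a potential $\Phi_t$ as the cost of the minimum-weight perfect matching between $A_t$ and $B_t$, with edge weights rescaled by a scale-dependent factor chosen so that the amortized inequality goes through. Since $\Phi_0 = \Phi_n = 0$, proving $\mathbb{E}[\mathrm{ALG}_i + \Delta\Phi_i] \leq O(\log R)\cdot \mathrm{OPT}_i$ for each request $i$ and summing telescopically yields the theorem.

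For the per-request step, fix the arriving car's destination $g(i)$; let $o_i$ be OPT's assigned slot at distance $\mathrm{OPT}_i := d(g(i),o_i)$, and let $l,r$ be the nearest ALG-free slots to the left and right of $g(i)$, at distances $d_l$ and $d_r$. ALG serves $g(i)$ with $l$ with probability $d_r/(d_l+d_r)$ and with $r$ with probability $d_l/(d_l+d_r)$, so $\mathbb{E}[\mathrm{ALG}_i] = 2 d_l d_r/(d_l+d_r)$. The combinatorial heart of the argument is that the harmonic probabilities are precisely what makes the expected change in the discrepancy matching nearly cancel ALG's local cost. I would carry this out by a case analysis on the position of $o_i$ relative to $l$ and $r$ (namely, $o_i \leq l$, $l \leq o_i \leq r$, or $o_i \geq r$), in each case expanding $\Delta\Phi_i$ as the update to the matching between $A_t$ and $B_t$ after $o_i$ leaves $A_t$ and either $l$ or $r$ enters $B_t$, and balancing it against $\mathrm{OPT}_i$ using the harmonic weights.

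The main obstacle is designing the edge-weight rescaling inside $\Phi_t$. Without rescaling, a single adversarial arrangement (the doubling example showing the $\Omega(2^n)$ lower bound against deterministic greedy) already produces unbounded per-step potential changes; on the other hand, too aggressive a rescaling makes $\mathrm{OPT}_i$ too small to absorb the overhead. The correct weighting is essentially dyadic: weight a matched pair at distance $\ell$ by a factor $\Theta(1 + \log(\ell/d_{\min}))$, which is exactly where the $\log R$ factor enters. Verifying that this weighting works across all cases is the technical heart of the proof; if a direct argument became unwieldy, I would fall back on embedding the line into a $2$-HST with $O(\log R)$ levels and $O(1)$ distortion, and running a scale-by-scale analysis of the resulting tree harmonic algorithm, on each level tracking the discrepancy between ALG's and OPT's coarsened free-slot configurations.
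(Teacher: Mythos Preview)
The paper does not give its own proof of this theorem: it is stated with a citation to Gupta and Lewi~\cite{GuptaL12}, and the surrounding text (and the proof of Theorem~\ref{thm:pricingweightedline}) explicitly defers to that reference. So there is no in-paper argument to compare against; what one can compare to is the Gupta--Lewi analysis that the paper invokes. From the paper's own use of it (the quantities $\delta_1$, $\delta_{\max}$, and the bound $Q_{\delta_1,k}\le \delta_1/k$ from their Lemma~4), that analysis proceeds by a \emph{hybrid} argument: one tracks, per request, how far the harmonic algorithm's chosen slot can drift from the slot in a reference solution, and bounds $\E[\delta_{\max}\mid\delta_1]$ using a gambler's-ruin style estimate $\Pr[\delta_{\max}\ge k\mid \delta_1]\le \delta_1/k$, then sums over scales to pick up the $\log(d_{\max}/d_{\min})$ factor.

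Your plan is a genuinely different route: a potential based on a (reweighted) minimum-cost matching between the symmetric difference of ALG's and OPT's free-slot sets. This is a natural idea in the $k$-server / online-matching tradition, and the harmonic probabilities do make $\E[\mathrm{ALG}_i]=2d_ld_r/(d_l+d_r)\le 2\min(d_l,d_r)$ interact nicely with such a potential. However, two points deserve flagging. First, a small slip: $\Phi_n$ need not be $0$ when $n<m$ (the two algorithms may end with different free sets); you only need $\Phi_n\ge 0$ and $\Phi_0=0$, which suffices for the telescoping. Second, and more substantively, the step you yourself identify as the ``technical heart'' --- the dyadic edge reweighting $\Theta(1+\log(\ell/d_{\min}))$ that is supposed to make the amortized inequality close --- is asserted rather than carried out, and it is exactly where the difficulty lives. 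With an \emph{unweighted} matching potential the per-step bound fails (the doubling instance), and it is not evident from your sketch that the proposed logarithmic weighting simultaneously (i) keeps $\Delta\Phi$ controlled in all three positional cases and (ii) leaves $\mathrm{OPT}_i$ large enough to absorb the residual. Your fallback, embedding into a $2$-HST and analyzing level by level, is yet another approach and would analyze a tree algorithm rather than the line-harmonic algorithm itself, so it does not directly establish the stated theorem. In short: the proposal is a reasonable outline of an alternative strategy, but as written it stops at the point where the real work begins, and it diverges from the hybrid/gambler's-ruin argument that underlies the cited result.
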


Theorem~\ref{thm:gupta} yields an $O(\log n)$ approximation algorithm to the min cost matching
in the case where quantity $d_{\mathrm{max}}/d_{\mathrm{min}}$,
called the \emph{aspect ratio} of the metric, can be bounded by a polynomial of $n$.
In fact, Gupta and Lewi~\cite{GuptaL12} also show how this algorithm can be combined
with the doubling technique so that effectively the aspect ratio is always bounded
by a polynomial of $n$.  This relies on continuous estimation of the optimum cost
and modifying the metric distances based on that.

We provide two pricing schemes. The price of anarchy of the simpler one matches
the competitive ratio guarantee of Harmonic as stated in Theorem~\ref{thm:gupta}.
The more sophisticated pricing scheme matches the refined bound of $O(\log n)$
given an estimate of the (final) optimum social cost a priori.

\subsubsection{Computing the posted parking fees}

We now show how to set prices to the vacant parking slots that effectively makes incoming [selfish] agents park {\em as though} they were obeying orders given by the Harmonic Algorithm.
As the algorithm is randomized, so will be the pricing.

We say that a pricing scheme \emph{ensures harmonic behavior} if, when an agent arrives:
\begin{itemize}
    \item She chooses to park at her goal if vacant.
    \item Otherwise, she chooses between the two closest vacant slots to the left and right of her goal with harmonic probabilities.
\end{itemize}

A maximal contiguous run of occupied parking spaces is called a \emph{block}.

Just before car $i$ is set to park,
let $V'=\{v_1,\dots ,v_k\}$ the set of vacant parking spots and let $B_1,B_2,\ldots, B_{\ell}\subseteq V$ the current set of ``blocks".
We use $L\left( B_j\right) $ to denote the first available vertex to the left of block $B_j$, for some $j\in \{1,\ldots,\ell\}$, and let $R\left( B_j\right)$ denote the first available vertex to the right of $B_j$. Let $d_j = d\left( L\left( B_j\right), R\left( B_j\right)\right)$, we define a set of distributions $D_1,D_2,\ldots,D_\ell$, where for every $j\in \{1,\ldots,\ell\}$, $D_j$ is uniformly distributed on the real interval $\left[ -d_j,d_j\right]$.

Let $P:V'\mapsto \Reals^+$ be a random function that sets prices to vacant slots. Previously, this function was denoted $P_i$, but we omit the index $i$ for the remainder of this section.

We say that $P$ satisfies the \emph{Harmonic Payment Conditions} if:
\begin{enumerate}\label{itm:GuptaLewi}
\item For any $j\in\{1,\ldots,\ell\}$, $P\left( L\left( B_j\right)\right)- P\left( R\left( B_j\right)\right)\sim D_j$.
\item For any $u,v\in V'$ s.t. there are no unoccupied parking slots between them: $-d \left( u,v \right) < P(u) - P(v) < d\left( u,v\right)$.
\end{enumerate}

\begin{lemma}
\label{lem:har-pay-cond}
Any pricing function $P$ that satisfies the Harmonic Payment Conditions, ensures harmonic behaviour on the part of arriving cars, regardless of their goals.
\end{lemma}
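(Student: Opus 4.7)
The plan is to verify, case by case, the two possible behaviors that the definition of ``harmonic behavior'' requires: (i) when the goal $g$ is a vacant slot, the agent parks at $g$; (ii) when $g$ lies inside some block $B_j$, the agent parks at $L(B_j)$ with probability $d(g,R(B_j))/d_j$ and at $R(B_j)$ with the complementary probability. The key preliminary observation is that condition 2 of the Harmonic Payment Conditions can be chained: for any two vacant slots $u,v\in V'$, walking along the line from $u$ to $v$ through the intermediate vacant slots $u=v^{(0)},v^{(1)},\ldots,v^{(t)}=v$ gives
\[
|P(u)-P(v)|\le \sum_{r=0}^{t-1}|P(v^{(r)})-P(v^{(r+1)})| < \sum_{r=0}^{t-1} d(v^{(r)},v^{(r+1)}) = d(u,v),
\]
where the last equality holds because the intermediate points lie on the line between $u$ and $v$ (so consecutive distances telescope). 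Thus $|P(u)-P(v)|<d(u,v)$ for \emph{every} pair of vacant slots.

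For case (i), if $g$ is vacant, then for any other vacant $s\ne g$ the chained inequality gives $P(s)-P(g)>-d(g,s)$, hence $d(g,s)+P(s)>P(g)=d(g,g)+P(g)$, so $g$ strictly minimizes the disutility. For case (ii), assume $g$ lies strictly inside block $B_j$, so the only vacant slots on the path from $g$ are $L(B_j)$ and $R(B_j)$. For any other vacant slot $v$, the line-geometry gives $d(g,v)=d(g,L(B_j))+d(L(B_j),v)$ (or the symmetric version through $R(B_j)$). Combining with the chained version of condition 2 applied to $L(B_j)$ and $v$ yields $d(g,v)+P(v)>d(g,L(B_j))+P(L(B_j))$, and similarly on the other side. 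Hence the agent strictly prefers one of $L(B_j),R(B_j)$ over every other slot.

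It remains to compute the probabilities of choosing $L(B_j)$ versus $R(B_j)$. Let $d_l:=d(g,L(B_j))$ and $d_r:=d(g,R(B_j))$, so $d_l+d_r=d_j$. The agent picks $L(B_j)$ iff
\[
P(L(B_j))-P(R(B_j)) \;<\; d_r-d_l.
\]
By condition 1, $P(L(B_j))-P(R(B_j))$ is uniform on $[-d_j,d_j]$, so the probability of this event equals $\bigl((d_r-d_l)+d_j\bigr)/(2d_j)=2d_r/(2d_j)=d_r/(d_l+d_r)$. Symmetrically, $R(B_j)$ is chosen with probability $d_l/(d_l+d_r)$, which is exactly the harmonic rule. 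The one boundary issue is ties, which occur on a measure-zero event under the uniform distribution and can be broken arbitrarily without affecting the probabilities.

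The main (minor) obstacle is the chaining step: condition 2 is stated only for ``neighboring'' vacant slots, and one must verify that it implies the same bound globally, which requires using that the underlying geometry is a line so that distances along the chain of vacant slots telescope. Once this is in hand, the rest of the argument is a routine comparison of disutilities together with the computation of the uniform-distribution probability.
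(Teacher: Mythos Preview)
Your proof is correct and follows essentially the same approach as the paper: chain condition~2 along intermediate vacant slots to obtain $|P(u)-P(v)|<d(u,v)$ for all vacant pairs, use this to show the agent parks at a vacant goal and otherwise narrows her choice to $L(B_j)$ versus $R(B_j)$, and then compute the probability via the uniform CDF of $D_j$. The only addition you make is an explicit remark on ties being a measure-zero event, which the paper omits.
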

\begin{proof}
    First, notice that the second condition of the harmonic payment conditions implies that for any $u,v\in V'$, $-d \left( u,v \right) < P(u) - P(v) < d\left( u,v\right)$, even if there are vacant slots between them. This implies that if an agent's goal, $g$, is vacant, then for all $v\in V' \setminus \{g\}$, we have that $P(g)<P(v)+d\left( g,v\right)$. Ergo, if the agent's goal is vacant, she parks there.

    Now consider an agent whose goal $v$ is blocked, i.e., $v\in B_j$ for some $j\in\{1,\ldots,\ell\}$. 
Let $v_L$ denote $L\left( B_j\right)$ and $v_R$ denote $R\left( B_j\right)$. For every vacant slot $v'\in V'$ left of $v_L$ we have $d\left( v, v_L\right) +P\left( v_L\right) < d\left( v, v_L\right) + d\left( v_L,v'\right)+ P\left( v'\right)=d\left( v, v'\right) + P\left( v'\right)$. Symmetrically, for every vacant slot $v'$ right of $v_R$, we have that $d\left( v, v_R\right) +P\left( v_R\right) <d\left( v, v'\right) + P\left( v'\right)$. Therefore, the agent has only to choose between $v_L$ and $v_R$.

    Let $F_j$ be the density function of distribution $D_j$. Notice that $F_j\left( x\right) = \frac{x+d_j}{2 d_j}$. We have that:
    \begin{align*}
        \Pr&\left[ \mbox{The agent takes } v_L\right] \\
         &= \Pr\left[ d\left( v, v_L\right) + P\left( v_L\right) \leq d\left( v, v_R\right) + P\left( v_R\right) \right]\\
        &= \Pr\left[ P\left( v_L\right)-P\left( v_R\right) < d\left( v, v_R\right)-d\left( v, v_L\right)\right]\\
        &= \Pr\left[ P\left( v_L\right)-P\left( v_R\right) < d_j-2d\left( v, v_L\right)\right]\\
        &= F_j\left( d_j-2d\left( v, v_L\right)\right)\\
        &= \frac{2 d_j-2 d\left( v, v_L\right)}{2 d_j}\\
        &= \frac{d\left( v, v_R\right)}{d_j} \enspace,
    \end{align*}
    where the forth equality is due to the first harmonic payment condition. Therefore, we get that the agent follows the harmonic behavior.
\qed\end{proof}

Next, we show a specific pricing scheme that satisfies the harmonic payment conditions.

\begin{observation}\label{obs:sat}
Let $R(v)$ be the set $\{j\in\{1,\ldots,\ell\}: v\text{ is left of } B_j\}$ and let $q_1\sim D_1,g_2\sim D_2,\ldots, q_\ell\sim D_\ell$.
For any constant $c$, setting $P(v)= c+\sum_{j\in R(v)}{q_j}$ satisfies the \emph{Harmonic Payment Conditions}.
\end{observation}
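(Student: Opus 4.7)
The plan is to verify the two Harmonic Payment Conditions separately, using a single set-theoretic identity as the main lever. First I would record that identity: for any block $B_j$, the slots $L(B_j)$ and $R(B_j)$ are the two vacant slots immediately flanking $B_j$, so for every \emph{other} block $B_k$ with $k \neq j$, both $L(B_j)$ and $R(B_j)$ lie on the same side of $B_k$. Consequently
\[
R(L(B_j)) \;=\; R(R(B_j)) \cup \{j\},
\]
with $j \in R(L(B_j))$ because $B_j$ lies to the right of $L(B_j)$, and $j \notin R(R(B_j))$ because $B_j$ lies to the left of $R(B_j)$.

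To check Condition~1, I would simply expand the definition of $P$ and observe that the constant $c$ and every $q_k$ with $k\neq j$ cancel by the identity above, leaving
\[
P(L(B_j)) - P(R(B_j)) \;=\; \sum_{k \in R(L(B_j))} q_k \;-\; \sum_{k \in R(R(B_j))} q_k \;=\; q_j,
\]
which is distributed as $D_j$ by construction.

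To check Condition~2, I would fix $u,v\in V'$ with no unoccupied slot strictly between them (say $u$ to the left of $v$) and split into two cases. If no block lies between $u$ and $v$, then every block is on the same side of both, so $R(u)=R(v)$ and $P(u)=P(v)$, giving $-d(u,v)<P(u)-P(v)<d(u,v)$ trivially (assuming positive edge weights). Otherwise, a single block $B_j$ sits between them, so $u=L(B_j)$ and $v=R(B_j)$; the Condition~1 computation then yields $P(u)-P(v)=q_j$, which by construction is supported on $[-d_j,d_j]$ with $d_j=d(L(B_j),R(B_j))=d(u,v)$.

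The only point requiring a word of care is the strict versus non-strict inequality in Condition~2, since $D_j$ is the uniform distribution on the \emph{closed} interval $[-d_j,d_j]$. Because $D_j$ is continuous the endpoints have probability zero, so the strict inequality holds almost surely; if a deterministic guarantee is desired, one can replace $D_j$ by the uniform distribution on the open interval without affecting anything else. No other step presents any real obstacle: the whole observation amounts to a telescoping calculation anchored on the set identity $R(L(B_j))=R(R(B_j))\cup\{j\}$.
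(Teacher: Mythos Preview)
Your proof is correct and essentially what the paper has in mind; the paper states this as an Observation and gives no argument at all, so you have simply filled in the telescoping computation that the authors deemed self-evident. Your identification of the almost-sure (rather than deterministic) nature of the strict inequality in Condition~2 is also accurate and worth noting, though the paper glosses over it.
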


We can use the constant $c$ in order to obtain a pricing scheme that satisfies the harmonic payment conditions in which there are no negative transfers. One can ask for a pricing scheme that optimizes an additional objective: minimizes the sum of prices of all vacant parking slots, the maximal price of a slot, the difference in pricing of the slots over time, etc. We can use an LP to satisfy such objectives.  For example, Linear Program~\ref{lp:min} minimizes the sum of prices of slots given $q_j \sim D_j$ as prescribed in Observation~\ref{obs:sat}.  Note that Linear Program~\ref{lp:min} is feasible since the payment scheme of Observation~\ref{obs:sat} satisfies all its constraints.

\begin{linprog}[tb]
$\min{\sum_{v \in V'}p_v}$ s.t.
\begin{itemize}
\item $\forall j\in \{1,\ldots, \ell\}$, $p_{L(B_j)} = p_{R(B_j)} + q_j$.
\item $\forall u,v\in V'$ s.t. there is no vacant slot between them, $-d \left( u,v\right) < p_u - p_v < d \left( u,v\right)$.
\item $\forall v\in V'$, $p_v\geq 0$.
\end{itemize}
\caption{A linear program for pricing that satisfies the harmonic payment conditions and also minimizes the sum of parking fees over all parked cars. Note that this program actually minimizes the sum of the posted prices of vacant parking spaces, but --- any feasible solution to this LP results in the same harmonic behavior so, by repeatedly computing such pricing, we minimize the sum of payments of all parked cars.}
\label{lp:min}
\end{linprog}

Since we have shown there exists a pricing scheme which makes incoming agents simulate the behavior of the harmonic algorithm, we can use this pricing scheme to coordinate incoming agents, and improve the PoA significantly. 

In graphs where the aspect ratio is bounded by some polynomial in $n$, we get:
\begin{corollary}
    \label{thm:pricingline}
    Any pricing scheme which satisfies the harmonic payment conditions gives an $O\left( \log{n}\right)$ approximation to the optimal assignment when the aspect ratio is bounded by $\mathrm{poly}(n)$.
\end{corollary}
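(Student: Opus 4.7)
The plan is to combine Lemma~\ref{lem:har-pay-cond} with the Gupta--Lewi bound (Theorem~\ref{thm:gupta}) in the obvious way. Specifically, I will argue that a pricing scheme satisfying the Harmonic Payment Conditions induces, under selfish play by the arriving agents, a stochastic process on parking assignments that is identical in distribution to the one produced by running the Harmonic algorithm on the same arrival sequence. The approximation ratio then transfers directly from the algorithmic guarantee to the price-of-anarchy bound in expectation.

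First, I would fix an arrival sequence $g(1), g(2), \ldots, g(n)$ and compare two random processes step by step. Process $\mathcal{A}$ is the Harmonic algorithm of Gupta and Lewi: on arrival of car $i$, if the goal is vacant it takes the goal, and otherwise it picks between the two closest vacant slots flanking the goal with harmonic probabilities $d_r/(d_l+d_r)$ and $d_l/(d_l+d_r)$. Process $\mathcal{P}$ is the selfish response of agents to the pricing scheme in question. By Lemma~\ref{lem:har-pay-cond}, conditional on any history of parked cars, the distribution over the vertex chosen by agent $i$ under $\mathcal{P}$ is exactly the distribution of $\mathcal{A}$'s choice. A straightforward induction on $i$ then shows that the joint distribution of the assignment $x$ (and therefore of the social cost $\sum_i d(g(i), x(i))$) is the same under $\mathcal{P}$ and $\mathcal{A}$.

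Given this coupling, for every fixed goal sequence the expected social cost of any equilibrium assignment under $\mathcal{P}$ equals the expected cost of Harmonic on the same sequence. Applying Theorem~\ref{thm:gupta}, this expected cost is at most $O(\log(d_{\max}/d_{\min}))$ times the optimum matching cost. Since the aspect ratio is assumed to be $\mathrm{poly}(n)$, we have $\log(d_{\max}/d_{\min}) = O(\log n)$, yielding the desired $O(\log n)$ bound on the expected price of anarchy as defined in Equation~(\ref{eq:epoa}).

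The only genuine content is the reduction to the Harmonic algorithm, and that has already been done in Lemma~\ref{lem:har-pay-cond}. The one subtlety I would be careful about is tie-breaking: the agents, not the mechanism, break ties, so strictly speaking I should note that ties occur with probability zero under the continuous distributions $D_j$ used to generate prices, so tie-breaking contributes nothing to the expectation. With that remark, the corollary is immediate.
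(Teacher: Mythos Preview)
Your proposal is correct and takes essentially the same approach as the paper: the corollary is treated there as immediate from Lemma~\ref{lem:har-pay-cond} (agents under harmonic pricing behave exactly as the Harmonic algorithm) together with Theorem~\ref{thm:gupta} and the assumption that $d_{\max}/d_{\min}=\mathrm{poly}(n)$. Your added remark that ties occur with probability zero because the $D_j$ are continuous is a nice clarification that the paper leaves implicit.
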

For general weighted line graphs, without further assumptions, we get the following:
\begin{theorem}
    \label{thm:pricingweightedline}
    Any pricing scheme which satisfies the harmonic payment conditions gives an $O\left(\min\{ \log{\frac{d_{\mathrm{max}}}{d_{\mathrm{max}}}}, n^2\}\right)$ approximation to the optimal assignment in weighted line graphs in expectation.
\end{theorem}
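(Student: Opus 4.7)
The plan is to reduce the question to a statement about the Harmonic Algorithm of Gupta and Lewi and then establish the two bounds separately, taking their minimum at the end.

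First I would invoke the preceding pieces of the paper to get the reduction. By Observation \ref{obs:sat}, a pricing function satisfying the harmonic payment conditions exists; by Lemma \ref{lem:har-pay-cond}, under any such pricing each arriving car parks at its goal if vacant, and otherwise picks the nearest vacant slot on its left with probability $d_r/(d_l+d_r)$ and the one on its right with probability $d_l/(d_l+d_r)$, regardless of its (private) goal. Hence the equilibrium assignment coincides in distribution with the assignment produced by the Harmonic Algorithm, and the expected social cost equals the expected cost of Harmonic. It then suffices to bound that expected cost by $O(\min\{\log(d_{\max}/d_{\min}),\, n^2\})\cdot\mathrm{OPT}$.

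The $O(\log(d_{\max}/d_{\min}))$ factor is immediate from Theorem \ref{thm:gupta} applied to the line metric on which Harmonic is run; no additional work is needed beyond observing that the bound of Gupta and Lewi is exactly an expected-cost bound against the optimal offline matching.

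For the $O(n^2)$ factor, the approach is a per-arrival argument. When car $i$ arrives, the proof of Lemma \ref{lem:har-pay-cond} already shows that its expected walk under harmonic behavior is $\tfrac{2 d_l^i d_r^i}{d_l^i+d_r^i} \le 2\min(d_l^i, d_r^i)$, where $d_l^i, d_r^i$ are the distances from its goal to the nearest vacant slots on either side (taking $\infty$ when one side has none, so that the minimum is then the distance on the other side). The plan is then to charge each $\min(d_l^i, d_r^i)$ to the offline optimum as follows: let $x^{\mathrm{opt}}(i)$ be the slot to which $i$ is matched in some fixed optimal assignment, with $d(g_i, x^{\mathrm{opt}}(i))\le\mathrm{OPT}$. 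If $x^{\mathrm{opt}}(i)$ is still vacant upon $i$'s arrival, then $\min(d_l^i, d_r^i)\le d(g_i,x^{\mathrm{opt}}(i))\le\mathrm{OPT}$. Otherwise, some earlier car $j<i$ has already taken $x^{\mathrm{opt}}(i)$, and one follows the chain of displacements induced by tracing through $x^{\mathrm{opt}}(j)$, and so on; this chain has length at most $n$ and, by the triangle inequality, yields $\min(d_l^i, d_r^i)\le n\cdot\mathrm{OPT}$. Summing $2\min(d_l^i, d_r^i)$ over the $n$ arrivals gives $O(n^2)\cdot\mathrm{OPT}$.

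The main obstacle is the charging argument in the $O(n^2)$ part: making the chain-of-displacements bound rigorous requires carefully controlling the total length of all such chains, ensuring that no single edge of $\mathrm{OPT}$ is charged more than $O(n)$ times, and handling the case where neither $d_l^i$ nor $d_r^i$ corresponds to $x^{\mathrm{opt}}(i)$ directly. Once this is done, combining the two bounds yields the stated $O(\min\{\log(d_{\max}/d_{\min}),\, n^2\})$ approximation.
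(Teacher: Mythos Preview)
Your reduction to the Harmonic Algorithm and the $O(\log(d_{\max}/d_{\min}))$ bound are both correct and match the paper's treatment. The $O(n^2)$ argument, however, has a genuine gap, and the route you sketch does not work as stated.

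The problem is the chain-of-displacements bound. You claim that tracing the chain $i\to j_1\to j_2\to\cdots$ (where $j_{l+1}$ is the car occupying $x^{\mathrm{opt}}(j_l)$) and applying the triangle inequality yields $\min(d_l^i,d_r^i)\le n\cdot\mathrm{OPT}$. But the triangle inequality along the chain produces terms of the form $d\bigl(x^{\mathrm{opt}}(j_{l-1}),x^{\mathrm{opt}}(j_l)\bigr)\le d\bigl(x^{\mathrm{opt}}(j_{l-1}),g_{j_l}\bigr)+d\bigl(g_{j_l},x^{\mathrm{opt}}(j_l)\bigr)$, and the first summand is exactly the distance that car $j_l$ walked in the algorithm's realization, not an $\mathrm{OPT}$ term. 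So the bound you get is circular: it expresses $\min(d_l^i,d_r^i)$ in terms of earlier algorithm walks. Concretely, take slots at $0,1,M$ and three cars with goals $1,1,M$. With probability $1/M$ Harmonic sends car~2 to $M$; then car~3 finds only slot $0$ vacant, so $\min(d_l^3,d_r^3)=M$ while $\mathrm{OPT}=1$. Hence the pointwise inequality $\min(d_l^i,d_r^i)\le n\cdot\mathrm{OPT}$ is simply false, and no purely combinatorial chain argument (ignoring probabilities) can establish it.

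The paper takes a completely different route for the $O(n^2)$ bound: it does not attempt a per-request charging at all but instead plugs into the Gupta--Lewi hybrid analysis and replaces only the final estimate. Their Lemma~4 bounds $\E[\delta_{\max}\mid\delta_1]$, and the paper observes that $\delta_{\max}$ is always a distance between two of the $n$ servers, hence takes at most $n^2$ values; combining this with the tail bound $\Pr[\delta_{\max}\ge k\mid\delta_1]\le\delta_1/k$ from Gupta--Lewi gives $\E[\delta_{\max}\mid\delta_1]\le n^2\delta_1$, which substitutes for the $O(\log(d_{\max}/d_{\min}))$ factor in their proof. In other words, the $O(n^2)$ bound is obtained by reusing the random-walk/hybrid machinery rather than by any direct accounting against $\mathrm{OPT}$; your approach bypasses that machinery, and this is precisely why it breaks.
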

\begin{proof}
The proof that the harmonic algorithm is $O\left( \log{\frac{d_{\mathrm{max}}}{d_{\mathrm{max}}}}\right)$ competitive is given in~\cite{GuptaL12}.
For proving that the competitive ratio is also bounded by $O(n^2)$, we use the same notation as in \cite{GuptaL12}%
\footnote{\sloppy A full version is available at
\url{http://theory.stanford.edu/~klewi/papers/metric-matching-full.pdf}
but its analysis of Harmonic is slightly different.}.

Replacing the final bound on $\E[\delta_{max} | \delta_1]$ in the proof of Lemma~4~\cite{GuptaL12}
by the following one gives us the desired result.
\begin{lemma}
\label{lm:newhybridlemma}
$\E[\delta_{max} | \delta_1] \leq n^2 \delta_1$.
\end{lemma}

\begin{proof}
First note that $\delta_{max}$ is the distance between two servers on the line, therefore there are at most $n^2$ possible values for this distance.
Let $J$ be the set of possible distances.
We get that
\begin{align*}
 \E_\pi [ \delta_{max} | \delta_1 ] &= \sum_{k \in J} k \Pr[\delta_{max} = k | \delta_1] \leq \sum_{k \in J} k \Pr[\delta_{max} \geq k | \delta_1] \\
 &= \sum_{k \in J} k Q_{\delta_1, k} \leq \sum_{k \in J} k \frac{\delta_1}{k}  = \sum_{k \in J} \delta_1 = |J|\delta_1 \\
 &\leq n^2 \delta_1 \enspace,
\end{align*}
where $Q_{\delta_1, k} \leq \frac{\delta_1}{k}$ by Lemma~4(ii) of~\cite{GuptaL12}.
\qedd\end{proof}\end{proof}
\subsubsection{Pricing with a Prior}
In this section we show how to improve the expected PoA via a dynamic pricing scheme when the city planner has an estimate of the cost of the optimal matching. This seems to be a reasonable assumption in many scenarios, since the city planner only needs to approximate the optimal matching to within a $\mathrm{poly}(n)$ factor in order to devise a pricing scheme that achieves an $O(\log n)$ approximation. This is an improvement over the \emph{harmonic pricing scheme} whenever the aspect ratio is superpolynomial in $n$. 

Gupta and Lewi show that given $Z\in \left[ \mathrm{OPT}, c\cdot \mathrm{OPT}\right]$ where $c>1$, one can get an $O\left(\log{cn}\right)$ approximation using the following metric change:
\begin{itemize}
    \item Disconnect every edge with weight $\geq Z$.
    \item Change every edge of weight smaller than $Z/(2c\cdot n^2)$ to be of weight \emph{exactly} $Z/(2c\cdot n^2)$.
\end{itemize}
Let $d'$ denote the new metric after the transformation. First, note that the aspect ratio in $d'$ is $2c\cdot n^3$, since the maximal distance is at most $Zn$, while the minimal one is at least $Z/(2c\cdot n^2)$. Another key observation is that since the optimal matching doesn't use any disconnected edges and the weight of each edge increased by at most $Z/(2c\cdot n^2)$, the cost of every matched server-request in the optimal matching increased by at most $Z/(2c\cdot n)$. Therefore, if OPT denotes both the optimal
matching in $(V,d)$ and its cost in $(V,d)$, then the cost of OPT in $(V,d')$ is at most $\mathrm{OPT}+Z/2c\leq 1.5\mathrm{OPT}$. As the weights in $d$ are no larger than in $d'$, running the harmonic algorithm in $(V,d')$ yields an $O\left(\log{cn}\right)$ approximation to $\mathrm{OPT}$ in $(V,d)$.

Thus, given $Z$, an estimate of the cost of the optimal matching, a dynamic pricing scheme needs to simulate the change in the metric. Specifically, the pricing scheme needs to:
\begin{enumerate}
    \item Make sure agents don't ``jump over" edges of weight bigger than $Z$.
    \item Make incoming agents simulate the behavior of the harmonic algorithm according to the transformed metric by acting selfishly.
\end{enumerate}

We say that an online matching algorithm is \emph{monotone} if the following conditions hold:
\begin{enumerate}
    \item If an incoming request is at a position of an unmatched server, the server will be matched to that request.
    \item Let $B_j$ be some block of matched servers, let $v_L=L(B_j)$ and $v_R=R(B_j)$.
    Consider any two points $a,b\in B_j$ and a request to be made at either of them.
    Then
       \begin{eqnarray*}
        &  d\left( a,v_L\right)\leq d\left( b,v_L\right) \iff \\
        &  \Pr\left[\mbox{request at }a\mbox{ would be matched to }v_L\right] \\
        &  \geq \Pr\left[\mbox{request at }b\mbox{ would be matched to }v_L\right].
        \end{eqnarray*}
\end{enumerate}
It is easy to see that running the harmonic algorithm on both the original and the transformed metrics yields a monotone algorithm. The following theorem shows that every \emph{monotone} algorithm can be priced:
\begin{theorem}
    \label{thm:monpay}
    Every monotone online algorithm for the min cost matching problem on a line can be transformed into a dynamic pricing scheme which makes incoming agents simulate the algorithm.
\end{theorem}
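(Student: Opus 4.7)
The plan is to generalise the construction in Observation~\ref{obs:sat} and Lemma~\ref{lem:har-pay-cond}. I would associate one random variable $q_j$ with each block $B_j$, tailored to the given monotone algorithm $A$, and then set $P(v):=c+\sum_{j\in R(v)} q_j$ exactly as in the harmonic case (with $c$ large enough to keep all prices non-negative). Almost the entire argument of Lemma~\ref{lem:har-pay-cond} then carries over; what must change is the distribution of $q_j$, which I will derive directly from $A$.

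Fix a block $B_j$ with boundary vacant slots $v_L=L(B_j)$, $v_R=R(B_j)$ at distance $d_j$. Monotonicity together with the line structure lets me describe $A$'s behaviour inside $B_j$ by a non-increasing function $p_L:B_j\to[0,1]$, the probability that $A$ matches a request at $v\in B_j$ to $v_L$, the remaining mass $1-p_L(v)$ going to $v_R$. (I rely here on the natural line-algorithm property that only the two boundary vacant slots of a block contend for requests inside it, as is the case for Harmonic.) A selfish agent at $v\in B_j$ prefers $v_L$ to $v_R$ precisely when $P(v_L)-P(v_R)<(v_L+v_R)-2v$. Writing $u=(v_L+v_R)-2v$, which sweeps $[-d_j,d_j]$ as $v$ traverses the block, I would take $q_j:=P(v_L)-P(v_R)$ distributed with CDF $F_j(u):=p_L\bigl((v_L+v_R-u)/2\bigr)$; this is a valid non-decreasing CDF on $[-d_j,d_j]$ exactly because $p_L$ is non-increasing, and it matches the marginal behaviour of $A$ inside each block by construction.

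It then remains to verify the analogue of the second Harmonic Payment Condition: a request with goal at a vacant slot is served there, and a request inside a block is served by $v_L$ or $v_R$ and never by a more distant vacant slot. For any two vacant slots $u,v$ with no vacant slot between them (so they bound a single block $B_j$), $|P(u)-P(v)|=|q_j|\leq d_j=d(u,v)$, and the general case telescopes through intervening vacant slots. Hence no ``jumping'' over a block is profitable, and combined with the preceding paragraph this yields selfish choices that mimic $A$.

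The main obstacle I anticipate is the boundary case in which $F_j$ places positive mass at $\pm d_j$, for instance when $A$ is deterministic on $B_j$. This weakens the inequality above to a non-strict one and creates indifference between a vacant goal and an adjacent slot. I would handle this in the spirit of the tie-breaking remarks after Theorem~\ref{thm:pricing-k-server-trees}: perturb $q_j$ by an arbitrarily small absolutely-continuous term and let the agent break residual ties the way $A$ would. The simulated distribution changes by an arbitrarily small amount per request, and the strict inequalities needed for telescoping are restored.
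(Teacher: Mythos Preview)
Your proposal is correct and follows essentially the same route as the paper: both define, for each block $B_j$, a distribution for the price gap $P(L(B_j))-P(R(B_j))$ whose CDF is read off from the algorithm's left-matching probabilities, then set $P(v)$ as a telescoping sum of these block variables and verify the two Harmonic-Payment-style conditions. Your treatment is in fact slightly more careful, since you explicitly flag and handle the boundary case where $q_j$ may place mass at $\pm d_j$ (violating the strict inequality), which the paper's proof passes over in silence.
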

\begin{proof}
Let $A$ be a monotone algorithm for the online min cost matching problem. Let $B_1,B_2,\ldots, B_{\ell}\subseteq V$ be the blocks that represent the current state of the graph, and let $V'=V\setminus\bigcup_{j=1}^{\ell} B_j$ the set of vacant parking slots. Let $v_L=L\left( B_j\right)$ and $v_R=R\left( B_j\right)$. Let $p_L^j:B_j\mapsto\left( 0,1\right]$ be a probability function that maps every vertex in the block into the probability of the vertex being matched by $v_L$ in $A$. Let $V_j=\{v_j^1,v_j^2,\ldots, v_j^t\}\subseteq B_j$ the set of vertices for which $p_L^j\left( v_j^1\right) > p_L^j\left( v_j^2\right) > \ldots> p_L^j\left( v_j^t\right)$. We define a distribution $D_j$ which determines the difference between $P\left( v_R\right)$ and $P\left( v_L\right)$. For every $i\in \{1,\ldots, t\}$, let $\Delta_i=d\left( v_j^i, v_L\right)-d\left( v_j^i, v_R\right)$.

$D_j$ is defined via the following cumulative density function $F_j:$
\begin{eqnarray*}
F_j(\Delta) =
\begin{cases}
0 &\Delta \leq -d_j\\
1-p_L^j\left( v_j^i\right) &\Delta_i \geq \Delta > \Delta_{i-1}\\
1 & \Delta > d_j,
\end{cases}
\end{eqnarray*}
where $\Delta_0=-d_j$ and $\Delta_{t+1}=d_j$. From $A$'s monotonicity, it is clear that $-d_j\leq \Delta_1\leq \Delta_2\leq\ldots\leq \Delta_t\leq d_j$, and therefore $F_j$ is a valid density function.

We claim that a dynamic pricing scheme $P:V'\mapsto \Reals^+$ that satisfies the following two conditions makes incoming agents simulate algorithm $A$ by acting selfishly:
\begin{enumerate}
    \item For any $j\in\{1,\ldots,\ell\}$, $P\left( R\left( B_j\right)\right)-P\left( L\left( B_j\right)\right)\sim D_j$.
    \item For any $u,v\in V'$ s.t. there are no unoccupied slots between them: $-d(u,v)<P(u)-P(v)<d(u,v)$.
\end{enumerate}

Just as in Lemma \ref{lem:har-pay-cond}, the second condition yields that if an agent's goal vertex is vacant, she will occupy it. Let $v\in B_j$ be the incoming agent's goal for some $j\in \{1,\ldots,\ell\}$, and let $v_j^i$ be the first vertex in $V_j$ (the one with minimal index $i$) such that $v$ is \emph{not right} of it. We have that $\Pr\left[ A\mbox{ matches }v\mbox{ to }v_L\right]=p_L^j\left( v_j^i\right)$. The probability that the selfish agent will occupy $v_L$ is: 
\begin{eqnarray*}
    \Pr\left[v\mbox{ occupies } v_L\right] & = & \Pr\left[ d_L + P\left( v_L\right) < d_R + P\left( v_R\right) \right]\\
    & = & \Pr\left[ P\left( v_R\right) - P\left( v_L\right) > d_L - d_R \right]\\
    & = & \Pr\left[ P\left( v_R\right) - P\left( v_L\right) > \Delta_i \right]\\
    & = & 1- \Pr\left[ P\left( v_R\right) - P\left( v_L\right) \leq \Delta_i \right]\\
    & = & 1-F\left( \Delta_i\right)\\
    & = & p_L^j\left( v_j^i\right),
\end{eqnarray*}
where the third equality stems from the fact that $\Delta_i \geq d_L - d_R > \Delta_{i-1}$. 

The dynamic pricing scheme given in Algorithm~\ref{alg:monpay} satisfies the conditions for simulating algorithm $A$.
\begin{alg}[tb]
\begin{itemize}
\item Let $q_1\sim D_1,g_2\sim D_2,\ldots, q_\ell\sim D_\ell$.
\item Let $L(v)$ be the set $\{j\in\{1,\ldots,\ell\}: v\text{ is right of } B_j\}$.
\item $\forall v\in V'$, set $P(v)= c+\sum_{j\in L(v)}{q_j}$ for some constant $c$.
\end{itemize}
\caption{A dynamic pricing scheme that makes selfish agents simulate algorithm $A$.}
\label{alg:monpay}
\end{alg}
\qed\end{proof}

We conclude:
\begin{corollary}
    Given an estimate of the cost of the optimal matching by a $\mathrm{poly}(n)$ factor, there exists a payments scheme that gives an $O\left( \log n\right)$ approximation to the optimal assignment in weighted line graphs in expectation.
\end{corollary}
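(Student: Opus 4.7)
The plan is to combine the metric transformation of Gupta and Lewi with the monotone-algorithm pricing machinery of Theorem~\ref{thm:monpay}. Given the estimate $Z\in[\mathrm{OPT},c\cdot\mathrm{OPT}]$ with $c=\mathrm{poly}(n)$, I would first apply the Gupta-Lewi transformation to the underlying metric $d$ to obtain $d'$ (disconnect edges of weight $\geq Z$, and raise edges of weight smaller than $Z/(2cn^2)$ to exactly $Z/(2cn^2)$). As already noted in the paper, $(V,d')$ has aspect ratio $O(cn^3)=\mathrm{poly}(n)$, and $\mathrm{OPT}$ in $(V,d')$ is at most $1.5\cdot\mathrm{OPT}$. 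By Theorem~\ref{thm:gupta}, running Harmonic on $(V,d')$ yields an $O(\log(cn))=O(\log n)$ approximation to $\mathrm{OPT}$ in $(V,d)$.

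Next, I would observe that Harmonic, run on the transformed metric $(V,d')$, is monotone in the sense defined just before Theorem~\ref{thm:monpay}: for each block $B_j$, the probability that Harmonic matches a request at $a\in B_j$ to $L(B_j)$ equals $d'(a,R(B_j))/(d'(a,L(B_j))+d'(a,R(B_j)))$, which is non-increasing in $d'(a,L(B_j))$; the endpoint conditions follow because Harmonic always takes a vacant goal. Therefore by Theorem~\ref{thm:monpay}, there is a dynamic pricing scheme (constructed as in Algorithm~\ref{alg:monpay} using the distributions $D_j$ induced by Harmonic on $(V,d')$) such that selfish agents, evaluating their disutility with respect to the original distances $d$ plus the posted prices, end up choosing the same vacant slot with the same probability as Harmonic would in $(V,d')$.

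The one subtlety I would have to handle carefully is that agents experience the \emph{true} metric $d$, not $d'$, so the selfish choice might try to ``jump'' across an edge of weight $\geq Z$ that has been severed in $d'$. Two remedies are available. First, since $Z\geq\mathrm{OPT}$, no edge of $d$-weight $\geq Z$ is used by the optimal matching, so Harmonic on $(V,d')$ never crosses such an edge either; I would add a sufficiently large surcharge (on top of the Algorithm~\ref{alg:monpay} prices) to every vacant slot lying on the far side of such a ``cut'' edge from the current goal region, so that no agent finds it preferable to cross. Second, to make Algorithm~\ref{alg:monpay} work with $d'$-distances while the agent disutility uses $d$-distances, the constant offsets inside the cumulative distribution $F_j$ must be adjusted by $d'$--versus--$d$ gaps; this is a purely local bookkeeping step, since in the computation of $\Pr[v \text{ picks }v_L]$ only the quantity $P(v_R)-P(v_L)$ compared against $d_L-d_R$ enters, and we need the prices to be drawn so that $P(v_R)-P(v_L)-(d_R-d_R')+(d_L-d_L')\sim D_j$, where $D_j$ is the $d'$--Harmonic distribution.

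The main obstacle is exactly this reconciliation: ensuring that a pricing scheme derived from the transformed metric produces the right behavior for agents who evaluate disutility with respect to the true metric. Once this is done, combining the factor-$1.5$ blowup in $\mathrm{OPT}$ between $(V,d)$ and $(V,d')$ with Theorem~\ref{thm:gupta} on $(V,d')$ and the simulation guarantee of Theorem~\ref{thm:monpay} yields the desired $O(\log n)$ bound, completing the corollary.
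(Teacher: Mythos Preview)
Your approach is essentially the paper's: apply the Gupta--Lewi metric transformation and then invoke Theorem~\ref{thm:monpay} to price Harmonic on the transformed metric. Two points are worth noting.

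First, a small mismatch with the hypothesis: the corollary allows a \emph{two-sided} estimate, i.e., $\mathrm{OPT}/p(n)\le Z\le p(n)\cdot\mathrm{OPT}$, whereas you assume the one-sided $Z\in[\mathrm{OPT},c\cdot\mathrm{OPT}]$. The paper fixes this by setting $Z'=Z\cdot p(n)$, which guarantees $Z'\in[\mathrm{OPT},p(n)^2\cdot\mathrm{OPT}]$ and then applies the transformation with $Z'$; you should do the same.

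Second, and more importantly, your ``second remedy'' (adjusting the offsets in $F_j$ to reconcile $d$ and $d'$) is unnecessary and reflects a misreading of Theorem~\ref{thm:monpay}. That theorem is stated for an \emph{arbitrary} monotone algorithm: it takes the algorithm's probabilities $p_L^j(\cdot)$ as a black box and builds the distributions $D_j$ using the \emph{original} metric $d$ (the thresholds are $\Delta_i=d(v_j^i,v_L)-d(v_j^i,v_R)$). So once you check that Harmonic-on-$d'$ is monotone with respect to $d$---which holds because the transformation preserves the left--right order on the line, hence $d(a,v_L)\le d(b,v_L)$ implies $d'(a,v_R)\ge d'(b,v_R)$---Theorem~\ref{thm:monpay} applies verbatim, with no bookkeeping adjustment. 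The paper's proof is accordingly a two-line invocation of the theorem; your extra paragraph about shifting $F_j$ by $d$--$d'$ gaps can be dropped.
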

\begin{proof}
    Let $Z$ be an estimate, and $p$ be some polynomial such that $\mathrm{OPT}/p(n)\leq Z\leq p(n)\cdot\mathrm{OPT}$. First, apply the metric change using $Z'=Z\cdot p(n)$ as input. Afterwards, use a dynamic pricing scheme that simulates the harmonic algorithm according to the transformed metric. There exists such a pricing scheme according to Theorem \ref{thm:monpay}. Since $Z'\in \left[ \mathrm{OPT}, p(n)^2\mathrm{OPT}\right]$, this yields an $O\left( \log{\left( p(n)^2\cdot n\right)}\right)=O\left( \log n\right)$ approximation to the optimal assignment in expectation.
\qed\end{proof}

\section{Discussion} \label{sec:discuss}
\subsubsection*{Observable History}

For some arbitrary function $h$, we define the $h$-observable history after event $i-1$ to be $h(\overline{\sigma},\overline{s})$ where $\overline{\sigma}=\sigma_1, \ldots,\sigma_{i-1}$ is the sequence of previous agent types and  $\overline{s}=s_1, \ldots, s_{i-1}$ the sequence of decisions made thus far.

 If for some non-invertible function $q$ we have that
$$g(\overline{\sigma},\overline{s}) = q(g'(\overline{\sigma},\overline{s})), \quad \forall \overline{\sigma}, \overline{s},$$ we say that $g$ {\em hides more} than $g'$.

This suggests a tradeoff between the price of anarchy of a dynamic posted pricing scheme and the information available to it.


We remark that the study of $h$-observable histories also makes sense for arbitrary online problems, in the context of mechanism design (not restricted to posted prices) and in the non-strategic setting of online algorithms.

\subsubsection*{Pricing}

With posted prices, prices are determined before the event occurs.

For non-posted price mechanisms --- one issue is ``when will the price be determined?" Is the payment determined promptly (when the agent arrives) or only later? In \cite{DBLP:conf/sigecom/FriedmanP03} this issue came up in the context of an online Groves mechanism.

\subsubsection*{Commitment to pricing, Rate of Price Change}

Obvious disadvantages of dynamic pricing include (a) prices are unknown, you plan to go downtown but don't know what it will cost you, (b) you may start circling the block waiting for the prices to change. Thus it would be highly advantageous to be able to (a) commit to prices in advance and (b) slow the rate of change.

One interesting model is to have the price function $P_i$ depend, not on the events $\sigma_1, \ldots, \sigma_{i-1}$ and decisions $s_1, \ldots, s_{i-1}$, but on a shorter prefix of these sequences. This has the effect of committing to prices in advance. A natural question is how does this impact the price of anarchy?

\subsubsection*{Simulating Simple Online Algorithms}

It is not clear if or how one can convert an arbitrary online algorithm to a dynamic pricing scheme. The simulation technique that we've repeatedly used above seems to work when the algorithm is particularly simple. For example, it is not clear how to give dynamic posted prices with good price of anarchy for the problem of metrical matching on metric spaces other than the line. One desirable goal would be to find alternative online algorithms, possibly with higher competitive ratios, and try to find pricing schemes that simulate these algorithms.

\subsubsection*{Structural Relationships}

Given an online setting, it is clear that the following weakly increase:
\begin{itemize}
  \item The competitive ratio of an online algorithm.
  \item The price of anarchy achievable by a dominant strategy incentive compatible mechanism, where the true agent types are revealed.
  \item The price of anarchy achievable by a dominant strategy incentive compatible mechanism, when restricted to the current agent type and some $h$-observable history of previous agents.
  \item The price of anarchy achievable by a dynamic pricing scheme restricted to $h$-observable history.
\end{itemize}
However, it is not clear if there is any stronger relation that holds in general.
Moreover, these can be further refined, leading to additional questions, e.g.:
\begin{itemize}
  \item How does promptness of payment determination impact the price of anarchy of mechanisms?
  \item Given that a mechanism or pricing schemes uses $h$-observable history,
  how does the price of anarchy depend on properties of $h$?
\end{itemize}

\balance
\bibliographystyle{plain}
\bibliography{ref}

\newcommand\DrawTriangleMain {
	\tikzstyle{blue node}=[draw, circle, minimum size=1ex, inner sep=0, fill=blue] 
	\begin{tikzpicture}[xscale=0.44,yscale=0.8]
	\node[blue node, label=1] (upper) at (1.25, 1.3) {};            
	\node[blue node,label={[xshift=-2mm, yshift=-2mm]2}] (left) at (0,0) {};
	\node[blue node, label={[xshift=2mm, yshift=-3mm]3}] (right) at (6,0) {};
	\draw (left) -- (right) node[draw=none, fill=none,midway,below, red] {4};
	\draw (upper) -- (right) node[draw=none, fill=none,midway,above, red] {3};
	\draw (left) -- (upper) node[draw=none, fill=none,above, red, label={[color=red, xshift=-5mm, yshift=-10mm]2}] {};
	\begin{scope}[shift={(0,-2)}]
	\foreach \i/\j/\k in {0/2/2,2/4/2,4/7/3,7/10/3,10/12/2,12/14/2,14/17/3} { 
		\node[blue node] (left) at (\i, 0) {};
		\node[blue node] (right) at (\j, 0) {};
		\draw (left) -- (right) node[draw=none, below, midway, red] {\k};
	}
	
	\foreach \i/\j in {0/1,2/2,4/1,7/3,10/1,12/2,14/1,17/3} { 
		\node[blue node,label=\j] (label) at (\i, 0) {};
	}
	\end{scope}
	\begin{scope}[shift={(0,-1)}]
	\foreach \i/\j in {0.3/3.8,4/4.8,5/14.8} {
		\node[] (left) at (\i, 0) {};
		\node[] (right) at (\j, 0) {};
		\draw[-triangle 60] (left.west) -- (right.east) ;	
	}
	\end{scope}
	\end{tikzpicture}
}

\appendix
\section{Appendix}
\subsection{An Illustration of the Fractional Traversal Algorithm}\label{append:trav-ex}

The following is an example illustrating the execution of Algorithm \ref{alg:t} given three tasks as input:
\begin{example}[H]
	\begin{itemize}
		\item Task $w^1=(3,6,3)$:
		\begin{itemize}
			\item $\lambda^1_1=2/3$, {\em i.e.}, process two thirds of task $w^1$ in state $\tau_1=1$, for a work expenditure of $3\cdot 2/3=2$. Increase $j$ by one.
			\item $\lambda^1_2=1/3$, {\em i.e.}, process the remaining 1/3 of task $w^1$ in state $\tau_2=2$, for a work expenditure of $6/3=2$. At this point the algorithm will increase $j$ by one. Move to state $\tau_3=1$.
			\end{itemize}
			\item Task $w^2=(1,3,4)$:
			$\lambda^2_3=1$, for a work expenditure of $1$ in state $\tau_3=1$, at this point of time $\rho_j=\rho_3=1$.
			\item Task $w^3=(10,10,10)$:
			\begin{itemize} \item
				$\lambda^3_3=2/10$, for a work expenditure of $2$ in state $\tau_3=2$. Increase $j$ by one. Move to state $\tau_4=3$.
				\item $\lambda^3_4=3/10$, for a work expenditure of $3$ in state $\tau_4=3$. Increase $j$ by one. Move to state $\tau_5 =1$.
				\item $\lambda^3_5=2/10$, for a work expenditure of $2$ in state $\tau_5=1$. Increase $j$ by one. Move to state $\tau_6=2$.
				\item $\lambda^3_6=2/10$, for a work expenditure of $2$ in state $\tau_6=2$. Increase $j$ by one. Move to state $\tau_7=1$.
				\item $\lambda^3_7=1/10$, for a work expenditure of $1$ in state $\tau_7=1$, at this point of time $\rho_7=1$.      
				\end{itemize}\end{itemize}    
				
				
				\DrawTriangleMain
				

				\caption{The fractional traversal algorithm applied to a metrical task system with states $\{1,2,3\}$, distances $d_{1,2}=2$, $d_{1,3} = 3$, $d_{2,3}=4$, and traversal sequence $\tau= (1,2,1,3)^*$.
					The example and the underlying metric are also illustrated.}
					\end{example}

\end{document}